\synctex = 1
\documentclass[11pt,a4paper]{article}

\usepackage{natbib}
\usepackage{bm}
\usepackage[left = 3cm, right = 3cm, top = 2.5cm, bottom = 3cm]{geometry}
\usepackage{amsmath}
\usepackage{tabularx}
\usepackage{amssymb}
\usepackage{amsthm}
\usepackage{pdfsync}
\usepackage{booktabs}
\usepackage{multirow}
\usepackage{graphicx}
\usepackage{mathdots}
\usepackage{url}

\newcommand*{\bb}{\boldsymbol}
\newcommand*{\hderstraight}[2]{\mathrm{d}{#1}/\mathrm{d}{#2} }

\newcommand{\refer}[1]{(\ref{#1})}

\newcommand*{\hessian}[2]{\mathcal{D}^2 \left({#1};{#2}\right)}

\newtheoremstyle{example}
{3pt}		
{3pt}		
{}	        
{0\parindent}	
{\bf}
{:}		
{.5em}		
{}		
\newtheoremstyle{theorem}
{3pt}		
{3pt}		
{\em}	        
{0\parindent}	
{\bf}
{:}		
{.5em}		
{}		
\theoremstyle{example}
\newtheorem{example}{Example}[section]
\theoremstyle{theorem}
\newtheorem{theorem}{Theorem}[section]

\def\trace{\mathop{\rm tr}}

\title{Improved estimation in cumulative link models}

\author{Ioannis Kosmidis\\
Department of Statistical Science, University College London \\
London, WC1E 6BT, UK \\
\texttt{i.kosmidis@ucl.ac.uk}}

\def\log{\mathop{\rm log}}
\def\trace{\mathop{\rm tr}}

\def\diag{\mathop{\rm diag}}

\begin{document}
\maketitle
\begin{abstract}
  For the estimation of cumulative link models for ordinal data, the
  bias-reducing adjusted score equations in \citet{firth:93} are
  obtained, whose solution ensures an estimator with smaller
  asymptotic bias than the maximum likelihood estimator. Their form
  suggests a parameter-dependent adjustment of the multinomial counts,
  which, in turn suggests the solution of the adjusted score equations
  through iterated maximum likelihood fits on adjusted counts, greatly
  facilitating implementation. Like the maximum likelihood estimator,
  the reduced-bias estimator is found to respect the invariance
  properties that make cumulative link models a good choice for the
  analysis of categorical data. Its additional finiteness and optimal
  frequentist properties, along with the adequate behaviour of
  related asymptotic inferential procedures make the reduced-bias
  estimator attractive as a default choice for practical
  applications. Furthermore, the proposed estimator enjoys certain
  shrinkage properties that are defensible from an experimental point
  of view relating to the nature of ordinal data. \\
\noindent
\emph{Key words}: reduction of bias, adjusted score equations,
adjusted counts, shrinkage, ordinal response models.
\end{abstract}


\section{Introduction}
In many models with categorical responses the maximum likelihood
estimates can be on the boundary of the parameter space with positive
probability. For example, \citet{albert:84} derive the conditions that
describe when the maximum likelihood estimates are on the boundary in
multinomial logistic regression models. While there is no ambiguity in
reporting an estimate on the boundary of the parameter space, as is
for example an infinite estimate for the parameters of a logistic
regression model, estimates on the boundary can (i) cause numerical
instabilities to fitting procedures, (ii) lead to misleading output
when estimation is based on iterative procedures with a stopping
criterion, and more importantly, (iii) cause havoc to asymptotic
inferential procedures, and especially to the ones that depend on
estimates of the standard error of the estimators (for example, Wald
tests and related confidence intervals).

The maximum likelihood estimator in cumulative link models for ordinal
data \citep{mccullagh:80} also has a positive probability of being on
the boundary.

\begin{example}
\label{wineExample}
As a demonstration consider the example in \citet[][\S
7]{christensen:12}. The data set in Table~\ref{wineData} comes from
\citet{randall:89} and concerns a factorial experiment for
investigating factors that affect the bitterness of white wine. There
are two factors in the experiment, temperature at the time of crushing
the grapes (with two levels, ``cold'' and ``warm'') and contact of the
juice with the skin (with two levels ``Yes'' and ``No''). For each
combination of factors two bottles were rated on their bitterness by a
panel of 9 judges. The responses of the judges on the bitterness of
the wine were taken on a continuous scale in the interval from 0
(``None'') to 100 (``Intense'') and then they were grouped
correspondingly into $5$ ordered categories, $1$, $2$, $3$, $4$, $5$.

\begin{table}[t!]
  \caption{The top table contains the wine tasting data
    \citep{randall:89} (top). The bottom table contains the
    maximum likelihood estimates for the parameters of
    model~(\ref{partialProp}), the corresponding estimated standard
    errors (in parenthesis) and the values of the $Z$ statistic
    (bottom) for the hypothesis that the corresponding parameter is
    zero.}
  \begin{small}
  \begin{center}
    \begin{tabular}{ccccccc}
      \midrule\midrule
      \multirow{2}{*}{Temperature} & \multirow{2}{*}{Contact} &
      \multicolumn{5}{c}{Bitterness scale} \\
      & & 1 & 2 & 3 & 4 & 5 \\ \midrule
      Cold & No  & 4 & 9 & 5 & 0 & 0 \\
      Cold & Yes & 1 & 7 & 8 & 2 & 0 \\
      Warm & No  & 0 & 5 & 8 & 3 & 2 \\
      Warm & Yes & 0 & 1 & 5 & 7 & 5 \\ \midrule
    \end{tabular}
    \\
    \begin{tabular}{crrr}
      \midrule
      Parameter & \multicolumn{2}{c}{ML estimates} & $Z$-statistic \\ \midrule
      $\alpha_1$ & -1.27 & (0.51) & -2.46 \\
      $\alpha_2$ &  1.10 & (0.44) & 2.52 \\
      $\alpha_3$ &  3.77 & (0.80) & 4.68 \\
      $\alpha_4$ & 28.90 & (193125.63) & 0.00 \\
      $\beta_1$ & 25.10 & (112072.69) & 0.00 \\
      $\beta_2$ &  2.15 & (0.59) & 3.65 \\
      $\beta_3$ &  2.87 & (0.82) & 3.52 \\
      $\beta_4$ & 26.55 & (193125.63) & 0.00 \\
      $\theta$ & 1.47 & (0.47) & 3.13 \\
      \midrule\midrule
\end{tabular}
\end{center}
\end{small}
\label{wineData}
\end{table}

Consider the partial proportional odds model \citep{peterson:90} with
\begin{equation}
  \label{partialProp}
  \log{\frac{\gamma_{rs}}{1 - \gamma_{rs}}} = \alpha_s - \beta_s w_r -
  \theta z_r \quad (r = 1, \ldots, 4; s = 1, \ldots, 4)\, ,
\end{equation}
where $w_r$ and $z_r$ are dummy variables representing the factors
temperature and contact, respectively, $\alpha_1, \ldots, \alpha_4,
\beta, \theta$ are model parameters and $\gamma_{rs}$ is the
cumulative probability for the $s$th category at the $r$th combination
of levels for temperature and contact. The \texttt{clm} function of
the R package \texttt{ordinal} \citep{ordinal:12} is used to maximize
the multinomial likelihood that corresponds to
model~(\ref{partialProp}). The \texttt{clm} function finds the maximum
likelihood estimates up to a specified accuracy, by using a
Newton-Raphson iteration for finding the roots of the log-likelihood
derivatives. For the current example, the stopping criterion is set to
that the log-likelihood derivatives are less than $10^{-10}$ in
absolute value.  The maximum likelihood estimates, estimated standard
errors and the corresponding values for the $Z$ statistics for the
hypothesis that the respective parameter is zero, are extracted from
the software output and shown in Table~\ref{wineData}. At those values
for the maximum likelihood estimator the maximum absolute
log-likelihood derivative is less than $10^{-10}$ and the software
correctly reports convergence. Nevertheless, an immediate observation
is that the absolute value of the estimates and estimated standard
errors for the parameters $\alpha_4$, $\beta_1$ and $\beta_4$ is very
large. Actually, these would diverge to infinity as the stopping
criteria of the iterative fitting procedure used become stricter and
the number of allowed iterations increases.

For model~(\ref{partialProp}) interest usually is on testing
departures from the assumption of proportional odds via the hypothesis
$\beta_1 = \beta_2 = \beta_3 = \beta_4$. Using a Wald-type statistic
would be adventurous here because such a statistic explicitly depends
on the estimates of $\beta_1$, $\beta_2$, $\beta_3$ and $\beta_4$. Of
course, given that the likelihood is close to its maximal value at the
estimates in Table~\ref{wineData}, a likelihood ratio test can be used
instead; the likelihood ratio test for the particular example has been
carried out in \citet[][\S 7]{christensen:12}.

Furthermore, the current example demonstrates some of the potential
dangers involved in the application of cumulative link models in
general; the behaviour of the individual $Z$ statistics --- being
essentially $0$ for the parameters $\beta_1$ and $\beta_4$ in this
example --- is quite typical of what happens when estimates diverge to
infinity. The values of the $Z$ statistics converge to zero because
the estimated standard errors diverge much faster than the estimates,
irrespective of whether or not there is evidence against the
individual hypotheses. This behaviour is also true for individual
hypotheses at values other than zero and can lead to invalid
conclusions if the output is interpreted naively. More importantly,
the presence of three infinite standard errors in a non-orthogonal
\citep[in the sense of][]{cox:87} setting like the current may affect
the estimates of the standard errors for other parameters in ways that
are hard to predict. \qed
\end{example}

An apparent solution to the issues mentioned in
Example~\ref{wineExample} is to use a different estimator that has
probability zero of resulting in estimates on the boundary of the
parameter space. For example, for the estimation of the common
difference in cumulative logits from ordinal data arranged in a
$2\times k$ contingency table with fixed row totals,
\citet{mccullagh:80} describes the generalized empirical logistic
transform. The generalized empirical logistic transform has smaller
asymptotic bias than the maximum likelihood estimator and is also
guaranteed to give finite estimates of the difference in cumulative
logits because it adjusts all cumulative counts by $1/2$. However, the
applicability of this estimator is limited to the analysis of $2\times
k$ tables, and particularly in estimating differences in cumulative
logits, with no obvious extension to more general cumulative link
models, such as the one in Example~\ref{wineExample}.

A family of estimators that can be used for arbitrary cumulative link
models and are guaranteed to be finite are ridge estimators. As one of
the referees highlighted, if one extends the work in
\citet{lecessie:92} for logistic regression to cumulative link models,
then the shrinkage properties of the ridge estimator can guarantee its
finiteness. Nevertheless, ridge estimators have a series of
shortcomings. Firstly, in contrast to the maximum likelihood
estimator, the ridge estimators are not generally equivariant under
general linear transformations of the parameters for cumulative link
models. A reparameterization of the model by re-scaling the parameters
or taking contrasts of those --- which are often interesting
transformations in cumulative link models --- and a corresponding
transformation of the ridge estimator will not generally result to the
estimator that the same ridge penalty would produce for the
reparameterized model, unless the penalty is also appropriately
adjusted. For example, if one wishes to test the hypothesis in
Example~\ref{wineExample} using a Wald test, then an appropriate ridge
estimator would be one that penalizes the size of the contrasts of
$\beta_1$, $\beta_2$, $\beta_3$ and $\beta_4$ instead of the size of
those parameters themselves. Secondly, the choice of the tuning
parameter is usually performed through the use of an optimality
criterion and cross-validation. Hence, the properties of the resultant
estimators are sensitive to the choice of the criterion. For example,
criteria like mean-squared error of predictions, classification error,
and log-likelihood that have been discussed in \citet{lecessie:92}
will each produce different results, as is also shown in the latter
study. Furthermore, the resultant ridge estimator is sensitive to the
type of cross-validation used. For example, $k$-fold cross-validation
will produce different results for different choices of $k$. Lastly,
standard asymptotic inferential procedures for performing hypothesis
tests and constructing confidence intervals cannot be used by simply
replacing the maximum likelihood estimator with the ridge estimator in
the associated pivots. For the above reasons, ridge estimators can
only offer an ad-hoc solution to the problem.

Several simulation studies on well-used models for discrete responses
have demonstrated that bias reduction via the adjustment of the
log-likelihood derivatives \citep{firth:93} offers a solution to the
problems relating to boundary estimates; see, for example,
\citet{mehrabi:95} for the estimation of simple complementary log-log
models, \citet{heinze:02} and \citet{bull:02,kosmidis:11} for
binomial and multinomial logistic regression, respectively, and
\citet{kosmidis:09a} for binomial-response generalized linear models.

In the current paper the aforementioned adjustment is derived and
evaluated for the estimation of cumulative link models for ordinal
responses. It is shown that reduction of bias is equivalent to a
parameter-dependent additive adjustment of the multinomial counts and
that such adjustment generalizes well-known constant adjustments in
cases like the estimation of cumulative logits. Then, the reduced-bias
estimates can be obtained through iterative maximum likelihood fits to
the adjusted counts. The form of the parameter-dependent adjustment is
also used to show that, like the maximum likelihood estimator, the
reduced-bias estimator is invariant to the level of sample aggregation
present in the data.

Furthermore, it is shown that the reduced-bias estimator respects the
invariance properties that make cumulative link models an attractive
choice for the analysis of ordinal data. The finiteness and shrinkage
properties of the proposed estimator are illustrated via detailed
complete enumeration and an extensive simulation exercise. In
particular, the reduced-bias estimator is found to be always finite,
and also the reduction of bias in cumulative link models results in
the shrinkage of the multinomial model towards a smaller binomial
model for the end-categories. A thorough discussion on the desirable
frequentist properties of the estimator is provided along with an
investigation of the performance of associated inferential procedures.

The finiteness of the reduced-bias estimator, its optimal frequentist
properties and the adequate performance of the associated inferential
procedures lead to its proposal for routine use in fitting cumulative
link models.

The exposition of the methodology is accompanied by a parallel
discussion of the corresponding implications in the application of the
models through examples with artificial and real data.

\section{Cumulative link models}
\label{cumlink}
Suppose observations of $n$ $k$-vectors of counts $y_1, \ldots, y_n$,
on mutually independent multinomial random vectors $Y_1, \ldots, Y_n$,
where ${\bb Y}_r = (Y_{r1}, \ldots, Y_{rk})^T$ and the $k$ multinomial
categories are ordered. The multinomial totals for $Y_r$ are $m_r =
\sum_{s = 1}^k y_{rs}$ and the probability for the $s$th category of
the $r$th multinomial vector is $\pi_{rk}$, with $\sum_{s = 1}^k
\pi_{rs} = 1$ $(r = 1, \ldots, n)$. The cumulative link model links
the cumulative probability $\gamma_{rs} = \pi_{r1} + \ldots +
\pi_{rs}$ to a $p$-vector of covariates ${\bb x}_r$ via the relationship
\begin{equation}
\label{model}
\gamma_{rs} = G\left(\alpha_s - \sum_{t = 1}^p\beta_t x_{rt}\right)
\quad (s = 1, \ldots, q; \,r = 1, \ldots, n) \, ,
\end{equation}
where $q = k -1$ denotes the number of the non-redundant components of
$y_r$, and where ${\bb\delta} = (\alpha_1, \ldots, \alpha_q, \beta_1,
\ldots, \beta_p)^T$ is a $(p + q)$-vector of real-valued model
parameters, with $\alpha_1 < \ldots < \alpha_q$. The function $G(.)$
is a monotone increasing function mapping $(-\infty, +\infty)$ to $(0,
1)$, usually chosen to be a known distribution function (like, for
example, the logistic, extreme value or standard normal distribution
function). Then, $\alpha_1, \ldots, \alpha_q$ can be considered as
cutpoints on the latent scale implied by $G$.

Special important cases of cumulative link models are the
proportional-odds model with $G(\eta) = \exp(\eta)/\{1+\exp(\eta)\}$,
and the proportional hazards model in discrete time with $G(\eta) = 1
- \exp\left\{-\exp(\eta)\right\}$ \citep[see,][for the introduction of
and a thorough discussion on cumulative link models]{mccullagh:80}.

The cumulative link model can be written in the usual multivariate
generalized linear models form by writing the relationship that links
the cumulative probability $\gamma_{rs}$ to ${\bb\delta}$ as
\begin{equation}
\label{glmForm}
G^{-1}(\gamma_{rs}) = \eta_{rs} =\sum_{t = 1}^{p + q} \delta_t z_{rst}
\quad (s = 1, \ldots, q;\, r = 1, \ldots, n) \, ,
\end{equation}
where $z_{rst}$ is the $(s, t)$th component of the $q \times (p + q)$
matrix
\[
Z_r = \left[
\begin{array}{ccccc}
1 & 0 & \ldots & 0 & -{\bb x}_{r}^T \\
0 & 1 & \ldots & 0 & -{\bb x}_{r}^T \\
\vdots & \vdots & \ddots & \vdots & \vdots \\
0 & 0 & \ldots & 1 & -{\bb x}_{r}^T \\
\end{array}
\right] \quad (r = 1, \ldots, n) \,.
\]
In order to be able to identify ${\bb\delta}$, the matrix $Z$
with row blocks $Z_1, \ldots, Z_n$ is assumed to be of full rank.

Direct differentiation of the multinomial log-likelihood
$l({\bb\delta})$ gives that the $t$th component of the vector
of score functions has the form
\begin{equation}
\label{score}
U_t({\bb\delta}) = \sum_{r = 1}^n\sum_{s = 1}^q g_{rs}({\bb\delta})
\left(\frac{y_{rs}}{\pi_{rs}({\bb\delta})} -
  \frac{y_{rs+1}}{\pi_{rs+1}({\bb\delta})} \right)z_{rst} \quad (t = 1,
\ldots, p + q)\,,
\end{equation}
where $g_{rs}({\bb\delta}) = g(\eta_{rs})$, with $g(\eta) =
\hderstraight{G(\eta)}{\eta}$. If $g(.)$ is log-concave then
$U_t(\hat{\bb\delta}) = 0$ $(t = 1, \ldots, p + q)$ has unique
solution the maximum likelihood estimate $\hat{\bb\delta}$
\citep[see,][where it is shown that the log-concavity of g(.) implies
the concavity of $l({\bb\delta})$]{pratt:81}.

All generalized linear models for binomial responses that include an
intercept parameter in the linear predictor are special cases of
model~(\ref{model}).

\section{Maximum likelihood estimates on the boundary}
\label{infinite}
The maximum likelihood estimates of the parameters of the cumulative
link model can be on the boundary of the parameter space with positive
probability. Under the log-concavity of $g(.)$, \citet{haberman:80}
gives conditions that guarantee that the maximum likelihood estimates
are not on the boundary (``exist'' in an alternative
terminology). Boundary estimates for these models are estimates of the
regression parameters $\beta_1, \ldots, \beta_p$ with an infinite
value, and/or estimates of the cutpoints $-\infty = \alpha_0 <
\alpha_1 < \ldots < \alpha_q < \alpha_k = \infty$ for which at least a
pair of consecutive cutpoints have equal estimated value.

As far as the regression parameters ${\bb\beta}$ are concerned,
\citet[][\S~3.4.5]{agresti:10a} gives an accessible account on what
data settings result in infinite estimates for the regression
parameters, how a fitted model with such estimates can be used for
inference and how such problems can be identified from the output of
standard statistical software.

As far as boundary values of the cutpoints ${\bb\alpha}$ are concerned,
\citet{pratt:81} showed that with a log-concave $g(.)$, the cutpoints
$\alpha_{s-1}$ and $\alpha_{s}$ have equal estimates if and only if
the observed counts for the $s$th category are zero $(s = 1, \ldots,
k)$ for all $r \in \{1, \ldots, n\}$. If the first or the last
category have zero counts then the respective estimates for $\alpha_1$
and $\alpha_q$ will be $-\infty$ and $+\infty$, respectively, and if
this happens for category $s$ for some $s\in\{2, \ldots, q\}$, then the
estimates for $\alpha_{s-1}$ and $\alpha_s$ will have the same finite
value.

\section{Bias correction and bias reduction}
\subsection{Adjusted score functions and first-order bias}
Denote by $b({\bb\delta})$ the first term in the asymptotic expansion
of the bias of the maximum likelihood estimator in decreasing orders
of information, usually sample-size. Call $b({\bb\delta})$ the
first-order bias term, and let $F({\bb\delta})$ denote the expected
information matrix for $\delta$. \citet{firth:93} showed that, if
$A({\bb\delta}) = -F({\bb\delta})b({\bb\delta})$ then the solution of the
adjusted score equations
\begin{equation}
\label{adjusted.score}
U_t^*({\bb\delta}) = U_t({\bb\delta}) + A_t({\bb\delta}) = 0 \quad (t
= 1, \ldots, q + p) \, ,
\end{equation}
results in an estimator that is free from the first-order term in the
asymptotic expansion of its bias.

\subsection{Reduced-bias estimator}
\label{adjscores}
\citet{kosmidis:09} exploited the structure of the bias-reducing
adjusted score functions in \refer{adjusted.score} in the case of
exponential family non-linear models. Using \citet[][expression
(9)]{kosmidis:09} for the adjusted score functions in the case of
multivariate generalized linear models, and temporarily omitting the
argument ${\bb\delta}$ from the quantities that depend on it, the
adjustment functions $A_t$ in \refer{adjusted.score} have the form
\begin{equation}
\label{adjustment}
A_t = \frac{1}{2}\sum_{r = 1}^nm_r\sum_{s = 1}^q\trace\left[V_r
  \left\{\left(D_r\Sigma_r^{-1}\right)_s \otimes 1_q
  \right\}\hessian{{\bb\pi}_r}{{\bb\eta}_r}\right]z_{rst} \quad (t = 1, \ldots,
q + p) \, ,
\end{equation}
where $V_r = Z_rF^{-1}Z_r^T$ is the asymptotic variance-covariance
matrix of the estimator for the vector of predictor functions
${\bb\eta}_r = (\eta_{r1}, \ldots, \eta_{rq})^T$ and ${\bb\pi}_r =
(\pi_{r1}, \ldots, \pi_{rq})^T$. Furthermore,
$\hessian{{\bb\pi}_r}{{\bb\eta}_r}$ is the $q^2 \times q$ matrix with
$s$th block the hessian of ${\bb\pi}_{rs}$ with respect to ${\bb
  \eta}_r$ $(s = 1, \ldots, q)$, $1_q$ is the $q \times q$ identity
matrix and $D_r^T$ is the $q \times q$ Jacobian of $m_r{\bb\pi}_r$
with respect to ${\bb\eta}_r$. A straightforward calculation shows
that
\[
D_r^T = m_r\left[
\begin{array}{ccccc}
g_{r1} & 0 & \ldots & 0 & 0 \\
-g_{r1} & g_{r2} & \ldots & 0 & 0 \\
0 & -g_{r2} & \ddots & \vdots & \vdots \\
\vdots & \vdots & \ddots & g_{rq-1} & 0 \\
0 & 0 & \ldots & -g_{rq-1} & g_{rq}
\end{array}
\right]\quad (r = 1, \ldots, n)\, .
\]
The matrix $\Sigma_r$ is the incomplete $q\times q$
variance-covariance matrix of the multinomial vector ${\bb Y}_r$ with $(s,
u)$th component
\[
\sigma_{rsu} = \left\{
  \begin{array}{ll}
    m_r\pi_{rs}(1 - \pi_{rs}) \, , & \quad s = u \\
    -m_r\pi_{rs}\pi_{ru} \, , & \quad s\ne u \\
  \end{array}\right. \quad (s, u = 1, \ldots, q;\, r=1,\ldots,n) \, .
\]

Substituting in \refer{adjustment}, some tedious calculation gives
that the adjustment functions $A_t$ have the form
\begin{equation}
\label{adjustment2}
A_t = \sum_{r = 1}^n\sum_{s = 1}^q g_{rs}
\left(\frac{c_{rs}- c_{rs-1}}{\pi_{rs}} -
  \frac{c_{rs+1} - c_{rs}}{\pi_{rs+1}} \right)z_{rst} \quad (t = 1,
\ldots, q+p)\,,
\end{equation}
where
\begin{equation}
\label{add.adj}
c_{r0} = c_{rk} = 0 \quad \text{and} \quad
c_{rs} = \frac{1}{2}m_rg'_{rs}v_{rss} \quad (s = 1, \ldots, q)\, ,
\end{equation}
with $g'_{rs} = g'(\eta_{rs})$, and $g'(\eta) =
\hderstraight{^2G(\eta)}{\eta^2}$, The quantity $v_{rss}$ is the $s$th
diagonal component of the matrix $V_r$ $(s = 1, \ldots, q;\, r = 1,
\ldots, n)$.

Substituting \refer{score} and \refer{adjustment2} in
\refer{adjusted.score} gives that the $t$th component of the
bias-reducing adjusted score vector $(t = 1, \ldots, q+p)$ has the
form
\begin{equation}
\label{adjustedScore2}
U_t^*({\bb\delta}) = \sum_{r = 1}^n\sum_{s = 1}^q g_{rs}({\bb\delta})
\left\{\frac{y_{rs} + c_{rs}({\bb\delta}) - c_{rs-1}({\bb\delta})}{\pi_{rs}({\bb\delta})} -
  \frac{y_{rs+1} + c_{rs+1}({\bb\delta}) -
    c_{rs}({\bb\delta})}{\pi_{rs+1}({\bb\delta})} \right\}z_{rst} \, .
\end{equation}
The reduced-bias estimates $\tilde{\bb\delta}_{RB}$ are such that
$U_t^*(\tilde{\bb\delta}_{RB}) = 0$ for every $t \in \{1 = 1, \ldots,
q+p\}$. \citet[][Chapter 6]{kosmidis:07} shows that if the maximum
likelihood is consistent, then the reduced-bias estimator is also
consistent. Furthermore, $\tilde{\bb\delta}_{RB}$ has the same
asymptotic distribution as $\hat{\bb\delta}$, namely a multivariate
Normal distribution with mean ${\bb\delta}$ and variance-covariance
matrix $F^{-1}({\bb\delta})$. Hence, estimated standard errors for
$\tilde{\bb\delta}_{RB}$ can be obtained as usual by using the square roots
of the diagonal elements of the inverse of the Fisher information at
$\tilde{\bb\delta}_{RB}$. All inferential procedures that rely in the
asymptotic Normality of the estimator can directly be adapted to the
reduced-bias estimator.

\subsection{Bias-corrected estimator}
Expression \refer{adjustment2} can also be used to evaluate the
first-order bias term as ${\bb b}({\bb\delta}) =
-F^{-1}({\bb\delta}){\bb A}({\bb\delta})$, where $F({\bb\delta}) =
\sum_{r =
  1}^nZ_r^TD_r({\bb\delta})\Sigma_r^{-1}({\bb\delta})D_r^T({\bb\delta})Z_r$. If
$\hat{\bb\delta}$ is the maximum likelihood estimator then
\begin{equation}
\label{biasCorrection}
\tilde{\bb\delta}_{BC} = \hat{\bb\delta} - {\bb b}(\hat{\bb\delta})
\end{equation}
is the bias-corrected estimator which has been studied in
\citet{cordeiro:91} for univariate generalized linear models. The
estimator $\tilde{\bb\delta}_{BC}$ can be shown to have no first-order term
in the expansion of its bias \citep[see,][for analytic derivation of
this result]{efron:75}.

\subsection{Models for binomial responses}
\label{binomModels}
For $k = 2$, $Y_{r1}$ has a Binomial distribution with index $m$ and
probability $\pi_{r1}$, and $Y_{r2} = m_r - Y_{r1}$. Then
model~(\ref{model}) reduces to the univariate generalized linear model
\[
G(\pi_r) = \alpha - \sum_{t = 1}^p\beta_t x_{rt} \quad (r = 1, \ldots, n) \, .
\]
From (\ref{adjustedScore2}), the adjusted score functions take the
form
\[
U_t^*({\bb\delta}) = \sum_{r = 1}^n g_{r1}({\bb\delta}) \left\{\frac{y_{r1} +
    c_{r1}({\bb\delta})}{\pi_{r1}({\bb\delta})} - \frac{m_r - y_{r1} -
    c_{r1}({\bb\delta})}{1 - \pi_{r1}({\bb\delta})} \right\}z_{r1t} \quad (t = 1,
\ldots, p + 1) \, .
\]
Omitting the category index for notational simplicity, a re-expression
of the above equality gives that the adjusted score functions for
binomial generalized linear models have the form
\begin{equation}
\label{adj.score.binom}
U_t^*({\bb\delta}) = \sum_{r=1}^n \frac{g_r}{\pi_r(1-\pi_r)}\left(y_r +
  \frac{g_r'}{2w_r}h_r - m_r \pi_r\right)z_{rt} \quad (t = 1,
\ldots, p + 1)\, ,
\end{equation}
where $w_r = m_r g_r^2/\{\pi_r(1-\pi_r)\}$ are the working weights and
$h_r$ is the $r$th diagonal component of the ``hat'' matrix $H =
ZF^{-1}Z^TW$, with $W = \diag\{w_1, \ldots, w_n\}$ and
\[
Z = \left[
  \begin{array}{ccc}
    1 & -{\bb x}_1^T \\
    1 & -{\bb x}_2^T \\
    \vdots & \vdots \\
    1 & -{\bb x}_n^T
  \end{array}
\right]\,.
\]
The above expression agrees with the results in \citet[][\S
4.3]{kosmidis:09}, where it is shown that for generalized linear
models reduction of bias via adjusted score functions is equivalent to
replacing the actual count $y_r$ with the parameter-dependent
adjusted count $y_r + g_r'h_r/(2w_r)$ $(r = 1, \ldots, n)$.

\section{Implementation}
\label{implementation}
\subsection{Maximum likelihood fits on iteratively adjusted counts}
\label{adjustedCounts}
When expression \refer{adjustedScore2} is compared to expression
\refer{score}, it is directly apparent that bias-reduction is
equivalent to the additive adjustment of the multinomial count
$y_{rs}$ by the quantity $c_{rs}({\bb\delta}) - c_{rs-1}({\bb\delta})$ $(s = 1,
\ldots, k; r = 1, \ldots, n)$. Noting that these quantities depend on
the model parameters in general, this interpretation of bias-reduction
can be exploited to set-up an iterative scheme with a stationary point
at the reduced-bias estimates: at each step, i) evaluate $y_{rs} +
c_{rs}({\bb\delta}) - c_{rs-1}({\bb\delta})$ at the current value of ${\bb\delta}$
$(s = 1, \ldots, q; r = 1, \ldots, n)$, and ii) fit the original model
to the adjusted counts using some standard maximum likelihood routine.

However, $c_{rs}({\bb\delta}) - c_{rs-1}({\bb\delta})$ can take negative values
which in turn may result in fitting the model on negative counts in
step ii) above. In principle this is possible but then the
log-concavity of $g(.)$ does not necessarily imply concavity of the
log-likelihood function and problems may arise when performing the
maximization in ii) \citep[see, for example,][where the transition
from the log-concavity of $g(.)$ to the concavity of the likelihood
requires that the latter is a weighted sum with non-negative
weights]{pratt:81}.  That is the reason why many published maximum
likelihood fitting routines will complain if supplied with negative
counts.

The issue can be remedied through a simple calculation. Temporarily
omitting the index $r$, let $a_s = c_{s}- c_{s-1}$ $(s = 1, \ldots,
k)$. Then the kernel $(y_{s} + a_s)/\pi_{s} - (y_{s} +
a_{s+1})/\pi_{s+1}$ in \refer{adjustedScore2} can be re-expressed as
\begin{align*}
\frac{y_{s} + a_sI(a_{s}>0)
  -\pi_{s}a_{s+1}I(a_{s+1}\le 0)/\pi_{s+1}}{\pi_{s}} -
\frac{y_{s+1} + a_{s+1}I(a_{s+1}>0)
  -\pi_{s+1}a_{s}I(a_{s} \le 0)/\pi_{s}}{\pi_{s+1}} \, ,
\end{align*}
where $I(E) = 1$ if $E$ holds and $0$ otherwise. Note that
\[
a_s({\bb\delta})I(a_{s}({\bb\delta})>0)
-\pi_{s}({\bb\delta})a_{s+1}({\bb\delta})I(a_{s+1}({\bb\delta})<0)/\pi_{s+1}({\bb\delta})
\ge 0 \, ,
\]
uniformly in ${\bb\delta}$. Hence, if step i) in the above procedure
adjusts $y_{rs}$ by $a_{rs}I(a_{rs}>0)
-\pi_{rs}a_{rs+1}I(a_{rs+1}<0)/\pi_{rs+1}$ evaluated at the current
value of ${\bb\delta}$, then the possibility of issues relating to negative
adjusted counts in step ii) is eliminated, and the resultant iterative
procedure still has a stationary point at the reduced-bias estimates.

\subsection{Iterative bias correction}
Another way to obtain the reduced-bias estimates is via the iterative
bias-correction procedure of \citet{kosmidis:10}; if the current value
of the estimates is ${\bb\delta}^{(i)}$ then the next candidate value is
calculated as
\begin{equation}
\label{quasifisher}
{\bb\delta}^{(i+1)} = \hat{\bb\delta}^{(i+1)} - b\left({\bb\delta}^{(i)}\right) \quad
(i = 0, 1, \ldots)\, ,
\end{equation}
where $\hat{\bb\delta}^{(i+1)} = \hat{\bb\delta}^{(i)} +
F^{-1}\left({\bb\delta}^{(i)}\right)U\left({\bb\delta}^{(i)}\right)$, that is
$\hat{\bb\delta}^{(i+1)}$ is the next candidate value for the maximum
likelihood estimator obtained through a single Fisher scoring step,
starting from ${\bb\delta}^{(i)}$.

Iteration (\ref{quasifisher}) generally requires more effort in
implementation than the iteration described in the
Subsection~\ref{adjustedCounts}. Nevertheless, if the starting value
${\bb\delta}^{(0)}$ is chosen to be the maximum likelihood estimates then
the first step of the procedure in (\ref{quasifisher}) will result in
the bias-corrected estimates defined in (\ref{biasCorrection}).

\section{Additive adjustment of the multinomial counts}
\subsection{Estimation of cumulative logits}
\label{cumlogits}
For the estimation of the cumulative logits $\alpha_s =
\log\{\gamma_s/(1-\gamma_s)\}$ $(s = 1, \ldots, q)$ from a single
multinomial observation $y_{1}, \ldots, y_{k}$ the maximum likelihood
estimator of $\alpha_s$ $(s = 1, \ldots, q)$ is $\hat\alpha_s =
\log\{R_{s}/(m - R_{s})\}$, where $R_{s} = \sum_{j = 1}^sY_{s}$ is the
$s$th cumulative count. The Fisher information for $\alpha_1, \ldots,
\alpha_q$ is the matrix of quadratic weights $W =
D\Sigma^{-1}D^T$. The matrix $W$ is symmetric and tri-diagonal with
non-zero components
\begin{align*}
  W_{ss} & = m\gamma_s^2(1 - \gamma_s)^2\left(\frac{1}{\gamma_s -
      \gamma_{s-1}} + \frac{1}{\gamma_{s+1} - \gamma_{s}}\right)\quad (s = 1, \ldots, q) \\
  W_{s-1,s} & = -m\frac{\gamma_{s-1}(1 - \gamma_{s-1})\gamma_s(1 -
    \gamma_s)}{\gamma_s - \gamma_{s-1}} \quad (s = 2, \ldots, q) \, ,
\end{align*}
with $\gamma_0 = 0$ and $\gamma_k = 1$. By use of the recursion
formulae in \citet{usmani:94} for the inversion of a tri-diagonal
matrix, the $s$th diagonal component of $F^{-1} = W^{-1}$ is
$1/(m\gamma_s(1-\gamma_s))$. Hence, using (\ref{add.adj}) and noting
that $g_s = \gamma_s(1-\gamma_{s})(1-2\gamma_{s})$ for the logistic
link, $c_s = \frac{1}{2} - \gamma_s$ $(s = 1, \ldots,
q)$. Substituting in (\ref{adjustedScore2}) yields that reduction of
bias is equivalent to adding $1/2$ to the counts for the first and the
last category and leaving the rest of the counts unchanged.

The above adjustment scheme reproduces the empirical logistic
transforms $\tilde\alpha_s = \log\{(R_s + 1/2)/(m - R_s + 1/2)\}$,
which are always finite and have smaller asymptotic bias than
$\hat\alpha_s$ \citep[see][\S 2.1.6, under the fact that the marginal
distribution of $R_s$ given $R_k = m$ is Binomial with index $m$ and
probability $\gamma_s$ for any $s \in \{1, \ldots, q\}$]{cox:89}.

\subsection{A note of caution for constant adjustments in general
  settings}
Since the works of \citet{haldane:55} and \citet{anscombe:56}
concerning the additive modification of the binomial count by $1/2$
for reducing the bias and guaranteeing finiteness in the problem of
log-odds estimation, the addition of small constants to counts when
the data are sparse has become a standard practice for avoiding
estimates on the boundary of the parameter space of categorical
response models \citep[see, for example][]{hitchcock:62, gart:67,
  gart:85, clogg:91}.  Especially in cumulative link models where
$g(.)$ is log-concave, if all the counts are positive then the maximum
likelihood estimates cannot be on the boundary of the parameter space
\citep{haberman:80}.

Despite their simplicity, constant adjustment schemes are not
recommended for general use for two reasons:
\begin{enumerate}
\item Because the adjustments are constants,
the resultant estimators are generally not invariant to different
representations of the data (for example, aggregated and disaggregated
view), a desirable invariance property that the maximum likelihood
estimator has, and which allows the practitioner not to be concerned
with whether the data at hand are fully aggregated or not.

\begin{example}
\label{artificial}
For example, consider the two representations of the same data in
Table~\ref{samedata}. Interest is in estimating the difference $\beta$
between logits of cumulative probabilities of the samples with $x =
-1/2$ from the samples with $x = 1/2$.

\begin{table}
\caption{Two alternative representations of the same artificial data
  set.}
\begin{small}
\begin{center}
\begin{tabular*}{0.3\textwidth}{r@{\extracolsep{\fill}}
c@{\extracolsep{\fill}}
c@{\extracolsep{\fill}}
c@{\extracolsep{\fill}}
c@{\extracolsep{\fill}}}
\midrule\midrule
 & \multicolumn{4}{c}{$Y$} \\ \cmidrule{2-5}
$x$ & 1 & 2 & 3 & 4 \\ \midrule
-1/2   & 8 & 6 & 1 & 0 \\
1/2   & 10 & 0 & 1 & 0 \\
1/2   & 8 & 1 & 0 & 0 \\ \midrule\midrule
\end{tabular*}
\hspace{3cm}
\begin{tabular*}{0.3\textwidth}{r@{\extracolsep{\fill}}
c@{\extracolsep{\fill}}
c@{\extracolsep{\fill}}
c@{\extracolsep{\fill}}
c@{\extracolsep{\fill}}}
\midrule\midrule
 & \multicolumn{4}{c}{$Y$} \\ \cmidrule{2-5}
$x$ & 1 & 2 & 3 & 4 \\ \midrule
-1/2   & 8 & 6 & 1 & 0 \\
1/2   & 18 & 1 & 1 & 0 \\ \midrule\midrule
\end{tabular*}
\end{center}
\end{small}
\label{samedata}
\end{table}

The maximum likelihood estimate of $\alpha_3$ is
$+\infty$. Irrespective of the data representation the maximum
likelihood estimate of $\beta$ is finite and has value $-1.944$ with
estimated standard error of $0.895$. Now suppose that the same small
constant, say $1/2$, is added to each of the counts in the rows of the
tables. The adjustment ensures that the parameter estimates are finite
for both representations. Nevertheless, a common constant added to
both tables causes --- in some cases large --- differences in the
resultant inferences for $\beta$. For the left table the maximum
likelihood estimate of $\beta$ based on the adjusted data is $-1.097$
with estimated standard error of $0.678$, and for the right table the
estimate is $-1.485$ with estimated standard error of $0.741$. If
Wald-type procedures were used for inferences on $\beta$ with a Normal
approximation for the distribution of the approximate pivot
$(\hat{\beta} - \beta)/S(\hat\beta)$, where $S(\beta)$ is the
asymptotic standard error at $\beta$ based on the Fisher information,
then the $p$-value of the test $\beta = 0$ would be $0.106$ if the
left table was used and $0.045$ if the right table was used.
\end{example}

\item Furthermore, the moments of the maximum likelihood estimator
  generally depend on the parameter values \citep[see, for
  example][for explicit expressions of the first-order bias term in
  the special case of binomial regression models]{cordeiro:91} and
  thus, as is also amply evident from the studies in
  \citet{hitchcock:62} and \citet{gart:85}, there cannot be a
  universal constant which yields estimates which are optimal
  according to some frequentist criterion.
\end{enumerate}
Both of the above concerns with constant adjustment schemes are dealt
with by using the additive adjustment scheme in
Subsection~\ref{adjustedCounts}. Firstly, by construction, the
iteration of Subsection~\ref{adjustedCounts} yields estimates which
have bias of second-order. Secondly, because the adjustments depend on
the parameters only through the linear predictors which, in turn, do
not depend on the way that the data are represented, the adjustment
scheme leads to estimators that are invariant to the data
representation. For both representations of the data in
Table~\ref{samedata} the bias-reduced estimate of $\beta$ is $-1.761$
with estimated standard error of $0.850$.

\section{Invariance properties of the reduced-bias estimator}
\subsection{Equivariance under linear transformations}
\label{equivariance}
The maximum likelihood estimator is exactly equivariant under
one-to-one transformations ${\bb\phi}(.)$ of the parameter
${\bb\delta}$. That is if $\hat{\bb\delta}$ is the maximum likelihood
estimator of ${\bb\delta}$ then, the maximum likelihood estimator of
${\bb\phi}({\bb\delta})$ is simply ${\bb\phi}(\hat{\bb\delta})$. In
contrast to $\hat{\bb\delta}$, the reduced-bias estimator
$\tilde{\bb\delta}_{RB}$ is not equivariant for all ${\bb\phi}$; bias
is a parameterization-specific quantity and hence any attempt to
improve it can violate exact equivariance. Nevertheless,
$\tilde{\bb\delta}_{RB}$ is equivariant under linear transformations
${\bb\phi}({\bb\delta}) = L{\bb\delta}$, where $L$ is a $(p+q) \times
(p+q)$ matrix of constants such that $ZL$ is of full rank and
${\bb\delta}' =L{\bb\delta}$ has $\alpha'_1 < \ldots < \alpha'_q$.

To see that, assume that one fits the multinomial model with
$\gamma_{rs} = G(\eta_{rs}')$ where $\eta_{rs}' = \sum_{t = 1}^{p +
  q}{\bb\delta}_t' z_{rst}$ $(r = 1, \ldots, n, s = 1, \ldots, q)$. Because
${\bb\delta}' = L{\bb\delta}$, $\eta_{rs}'$ is a linear combination of
${\bb\delta}$. Using expression~(\ref{adjustedScore2}), the $t$th component
of the adjusted score function for ${\bb\delta}'$ is
\begin{equation}
\label{dashedAdjScore}
U_t' = \sum_{r = 1}^n\sum_{s = 1}^q g_{rs}' \left\{\frac{y_{rs} +
    c_{rs}' - c_{rs-1}'}{\pi_{rs}'} - \frac{y_{rs+1} + c_{rs+1}' -
    c_{rs}')}{\pi_{rs+1}'}
\right\}z_{rst} \, ,
\end{equation}
for $t \in \{1, \ldots, p + q\}$, where $c_{rs}'$, $\pi_{rs}'$,
$g_{rs}'$ are evaluated at ${\bb\delta}'$. Note that all quantities in
(\ref{dashedAdjScore}) depend on ${\bb\delta}'$ only though the linear
combinations $\eta_{rs}'$. Thus, comparing (\ref{adjustedScore2}) to
(\ref{dashedAdjScore}), if $\tilde{\bb\delta}_{RB}$ is a solution of
$U_t^* = 0$ $(t = 1, \ldots, p+q)$, then $L\tilde{\bb\delta}_{RB}$ must be
a solution of $U_t' = 0$ $(t = 1, \ldots, p+q)$.

The bias-corrected estimator defined in (\ref{biasCorrection}) can be
shown also to be equivariant under linear transformations, using the
equivariance of the maximum likelihood estimator and the fact that
${\bb b}({\bb\delta})$ depends on ${\bb\delta}$ only through the
linear predictors.

\subsection{Invariance under reversal of the order of categories}
One of the properties of proportional-odds models, and generally of
cumulative link models with a symmetric latent distribution $G(.)$ is
their invariance under the reversal of the order of categories; a
reversal of the categories along with a simultaneous change of the
sign of ${\bb\beta}$ and change of sign --- and hence order --- to
$\alpha_1, \ldots, \alpha_q$ in model~(\ref{model}) results in the
same category probabilities. Given the usual arbitrariness in the
definition of ordinal scales in applications this is a desirable
invariance property for the analysis of ordinal data.

The maximum likelihood estimator respects this invariance
property. That is if the categories are reversed then the new fit can
be obtained by merely using $-\hat{\bb\beta}_{ML}$ for the regression
parameters and $(-\hat\alpha_{q}, \ldots, -\hat\alpha_{1})$ for the
cutpoints.

The reduced-bias estimator respects the same invariance property,
too. To see this, assume that one fits the multinomial model with $1 -
\gamma_{rs} = G(\alpha_{k - s} - {\bb\beta}^T{\bb x}_r)$ $(r = 1,
\ldots, n, s = 1, \ldots, q)$ with $\alpha_{1} < \ldots <
\alpha_{q}$. Because $g(.)$ is symmetric about zero, $G(\eta) = 1 -
G(-\eta)$, and so $\gamma_{rs} = G(-\alpha_{k - s} +
{\bb\beta}^Tx_r)$. This is a reparameterization of
model~(\ref{glmForm}) to $\gamma_{rs} = G(\sum_{t =
  1}^{p+q}{\bb\delta}_t'z_{rst})$ where ${\bb\delta}' = (\alpha_1',
\ldots, \alpha_q', \beta_1', \ldots, \beta_p')^T = (-\alpha_q, \ldots,
-\alpha_1, -\beta_1, \ldots, -\beta_p)^T$. Hence, ${\bb\delta}' = L
{\bb\delta}$ with
\[
L = \left[
\begin{array}{rrrrr}
0 & \ldots & 0 & -1 & 0\\
0 & \ldots & -1 & 0 & 0\\
\vdots & \iddots & \vdots & \vdots & \vdots \\
-1 & \ldots & 0 & 0 & 0 \\
0 & \ldots & 0 & 0 & -1 \\
\end{array}
\right] \,,
\]
and based on the results of Subsection~\ref{equivariance},
$\tilde{\bb\delta}'_{RB} = L \tilde{\bb\delta}_{RB}$ (and also
$\tilde{\bb\delta}'_{BC} = L \tilde{\bb\delta}_{BC}$).

\section{Properties of the reduced-bias estimator and associated
  inferential procedures: a complete enumeration study}
\label{completeenumeration}
\subsection{Study design}
The frequentist properties of the reduced-bias estimator are
investigated through a complete enumeration study of $2 \times k$
contingency tables with fixed row totals.  The rows of the tables
correspond to a two-level covariate $x$ with values $x_1$ and $x_2$,
and the columns to the levels of an ordinal response $Y$ with
categories $1, \ldots, k$. The row totals are fixed to $m_1$ for $x =
x_1$ and to $m_2$ for $x = x_2$. The right table in
Table~\ref{samedata} is a special case with $k = 4$, $x_1 = -1/2$,
$x_2 = 1/2$, and row totals $m_1 = 15$, $m_2 = 20$.  The present
complete enumeration involves $\binom{m_1 + q}{m_1}\binom{m_2 +
  q}{m_2}$ tables. We consider a multinomial model with
\begin{align}
\label{2xk_model}
\gamma_{1s} & = G(\alpha_s - \beta x_1)\, ,\\ \notag
\gamma_{2s} & = G(\alpha_s - \beta x_2) \quad (s=1,\ldots,q)\, ,
\end{align}
where $\alpha_1, \ldots, \alpha_q$ are regarded as nuisance parameters
but are essential to be estimated from the data, because they allow
flexibility in the probability configurations within each of the rows
of the table.

For the estimation of $\beta$ we consider the maximum likelihood
estimator $\hat\beta$, the bias-corrected estimator
$\tilde\beta_{BC}$, the reduced-bias estimator $\tilde\beta_{RB}$, and
the generalized empirical logistic transform $\hat\beta_{EL}$ which is
defined in \citet[][\S 2.3]{mccullagh:80} and is an alternative
estimator with smaller asymptotic bias than the maximum likelihood
estimator specifically engineered for the estimation of $\beta$ in
$2\times k$ tables with fixed row totals. The estimators $\hat\beta$,
$\tilde\beta_{BC}$ and $\tilde\beta_{RB}$ are the $\beta$-components
of the vectors of estimators $\hat{\bb\delta}$, $\tilde{\bb\delta}_{BC}$ and
$\tilde{\bb\delta}_{RB}$, respectively, where ${\bb\delta} = (\alpha_1, \ldots,
\alpha_q, \beta)^T$ is the vector of all parameters. The estimators
are compared in terms of bias, mean-squared error and coverage
probability of the respective Wald-type asymptotic confidence
intervals. The following theorem is specific to $2\times k$ and
cumulative link models, and can be used to reduce the parameter
settings that need to be considered in the current study for
evaluating the performance of the estimators.

\begin{theorem}
\label{symmetry}
  Consider a $2 \times k$ contingency table $T$ with fixed row totals
  $m_1$ and $m_2$, and the multinomial model that satisfies
  (\ref{2xk_model}). Furthermore, consider an estimator ${\bb\delta}^*(T)$
  of ${\bb\delta}$, which is equivariant under linear transformations. Then
  if $m_1 = m_2$, the bias function and the mean squared error of
  $\beta^*(T)$ satisfy
  \[
  E(\beta^*(T) - \beta; \beta, {\bb\alpha}) =  -E(\beta^*(T) + \beta;
  -\beta, {\bb\alpha}) \, , \quad \text{and}
  \]
  \[
  E\left\{(\beta^*(T) - \beta)^2; \beta, {\bb\alpha}\right\} =
  E\left\{(\beta^*(T) + \beta)^2; -\beta, {\bb\alpha}\right\} \, ,
  \quad \text{respectively}\,.
  \]
\end{theorem}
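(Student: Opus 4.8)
The plan is to exploit a symmetry of the data‑generating mechanism of model~(\ref{2xk_model}): working, as in the enumeration study, with a covariate centred at zero so that $x_2 = -x_1$, reversing the sign of $\beta$ is undone by interchanging the two rows of the table, and when $m_1 = m_2$ that interchange is a bijection of the finite sample space. So I would first record the distributional identity. Because $\beta$ enters model~(\ref{2xk_model}) only through $\beta x_r$ and $\beta x_2 = -\beta x_1$, the cell probabilities satisfy $\pi_{1s}(\beta,{\bb\alpha}) = \pi_{2s}(-\beta,{\bb\alpha})$ and $\pi_{2s}(\beta,{\bb\alpha}) = \pi_{1s}(-\beta,{\bb\alpha})$ for every $s$; hence, writing $\tau$ for the map that swaps the two rows of a $2\times k$ table, $T$ is distributed according to model~(\ref{2xk_model}) at $(\beta,{\bb\alpha})$ if and only if $\tau T$ is distributed according to model~(\ref{2xk_model}) at $(-\beta,{\bb\alpha})$. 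This is the only step in which the hypothesis $m_1 = m_2$ enters.

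The second ingredient is the estimator identity $\beta^*(\tau T) = -\beta^*(T)$. Swapping the two rows of $T$ is the same as swapping the two covariate values and then reading the table in its original row order; since ${\bb\delta}^*$ — like the maximum likelihood, bias‑corrected, bias‑reduced and empirical logistic estimators — does not depend on the order in which the two multinomial observations are listed, $\beta^*(\tau T)$ equals $\beta^*$ computed from $T$ with the design associated with covariate values $(x_2,x_1) = (-x_1,-x_2)$. That design is obtained from the original one by the full‑rank linear reparametrization ${\bb\delta}' = L{\bb\delta}$ with $L = \diag(1_q,-1)$ (the same linear predictors, since $-(-\beta)(-x_r) = -\beta x_r$), and this $L$ preserves the ordering $\alpha_1 < \dots < \alpha_q$; so by the assumed equivariance the estimator under the new design is $L$ times the estimator under the original design, and reading off the last coordinate gives $\beta^*(\tau T) = -\beta^*(T)$.

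Finally I would combine the two. Since $\tau$ is an involution, the distributional identity gives $E\{\psi(T); -\beta,{\bb\alpha}\} = E\{\psi(\tau S); \beta,{\bb\alpha}\}$ for every statistic $\psi$, where $S$ follows model~(\ref{2xk_model}) at $(\beta,{\bb\alpha})$. Taking $\psi(T) = \beta^*(T)+\beta$ and using $\beta^*(\tau S) = -\beta^*(S)$ yields $E\{\beta^*(T)+\beta; -\beta,{\bb\alpha}\} = E\{-\beta^*(S)+\beta; \beta,{\bb\alpha}\} = -E\{\beta^*(S)-\beta; \beta,{\bb\alpha}\}$, which is the first assertion; taking $\psi(T) = (\beta^*(T)+\beta)^2$ gives $E\{(\beta^*(T)+\beta)^2; -\beta,{\bb\alpha}\} = E\{(-\beta^*(S)+\beta)^2; \beta,{\bb\alpha}\} = E\{(\beta^*(S)-\beta)^2; \beta,{\bb\alpha}\}$, which is the second. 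I expect the one genuinely delicate point to be the middle step: one has to notice that, precisely because $x_2 = -x_1$, reversing the sign of the covariate coincides with permuting the two rows, so the row interchange is implemented by a linear transformation of the parameters and the equivariance hypothesis is exactly what is needed; the remainder is routine substitution.
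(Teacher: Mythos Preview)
Your proposal is correct and follows essentially the same route as the paper: both fix the symmetric covariate setting $x_2=-x_1$, use the row-swap involution together with the distributional identity $P(T;\beta,{\bb\alpha})=P(\tau T;-\beta,{\bb\alpha})$ (valid as a bijection of the sample space precisely when $m_1=m_2$), and invoke equivariance under $L=\diag(1_q,-1)$ to obtain $\beta^*(\tau T)=-\beta^*(T)$. The only cosmetic difference is that the paper computes the expectation by explicitly partitioning the sample space into the diagonal set $\mathcal E$ (tables with equal rows, on which $\beta^*=0$) and the two halves $\mathcal F_1$, $\mathcal F_2$ exchanged by the row swap, whereas you package the same calculation as a change of variable; the content is identical.
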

\begin{proof}
  Define an operator $R$ which when applied to $T$ results in a new
  contingency table by reversing the order of the rows of $T$. Hence,
  $R(R(T)) = T$.

  Because ${\bb\delta}^*(T)$ is equivariant under linear
  transformations, it suffices to study the behaviour of $\beta^*(T)$
  when $x_1 = -1/2$ and $x_2 = 1/2$. Then, any combination of values
  for $x_1$ and $x_2$ results by an affine transformation of the
  vector $(-1/2, 1/2)$, and equivariance gives that a corresponding
  translation of the vector ${\bb\alpha}^*(T)$ and change of scaling
  of $\beta^*(T)$ results in exactly the same fit. Hence, the shape
  properties of $\beta^*(T)$ remain invariant to the choice of $(x_1,
  x_2)^T$.

  Denote with $\mathcal T$ the set of all possible $2\times k$ tables
  with fixed row totals $m_1$ and $m_2$. By the definition of the
  model, $P(T; \beta, {\bb\alpha}) = P(R(T); -\beta, {\bb\alpha})$ for
  every $T \in \mathcal T$.  Because $m_1 = m_2$ there is a subset
  $\mathcal E \subset \mathcal T$ of tables with $(y_{11}, \ldots,
  y_{1k}) = (y_{21}, \ldots, y_{2k})$.  The complement of $\mathcal E$
  can be partitioned into the sets $\mathcal F_1$ and $\mathcal F_2$
  which have the same cardinality, and where $T \in \mathcal F_1$ if
  and only if $R(T) \in \mathcal F_2$. For $x_1 = -1/2$ and $x_2 =
  1/2$, equivariance under the linear transformation $\phi(\beta) =
  -\beta$ gives that $\beta^*(T) = -\beta^*(R(T))$.  Then, for any $T
  \in \mathcal E$, $\beta^*(T) = 0$. Hence,
  \begin{align}
    \label{calc}
    E(\beta^*(T); \beta, {\bb\alpha}) & = \sum_{T \notin \mathcal E}
    \beta^*(T) P(T; \beta, {\bb\alpha}) \\ \notag
    & = \sum_{T \notin \mathcal E, T
      \in \mathcal F_1} \beta^*(T) \left\{ P(T;
      \beta, {\bb\alpha}) - P(R(T); \beta, {\bb\alpha})\right\} \\ \notag
    & = \sum_{T \notin \mathcal E, T \in \mathcal F_1} \beta^*(T)
    \left\{
      P(R(T); -\beta, {\bb\alpha}) - P(T; -\beta, {\bb\alpha})\right\}
    = -E(\beta^*(T); -\beta, \alpha)
  \end{align}
  Adding $-\beta$ to both sides of the above equality gives the
  identity on the bias. For the identity on the mean squared error one
  merely needs to repeat a corresponding calculation to (\ref{calc})
  starting from $E\{(\beta^*(T) - \beta)^2; \beta, {\bb\alpha}\} = \sum_{T
    \notin \mathcal E} (\beta^*(T) - \beta)^2 P(T; \beta, {\bb\alpha}) +
  \beta^2 \sum_{T \in \mathcal E} P(T; \beta, {\bb\alpha})$.
\end{proof}

A similar line of proof can be used to show that if $m_1 = m_2$ the
coverage probability of Wald-type asymptotic confidence intervals for
$\beta$ is symmetric about $\beta = 0$, provided that the estimator
$S(T)$ of the standard error of $\beta^*(T)$ satisfies $S(T) =
S(R(T))$.

\subsection{Special case: Proportional odds model}
For demonstration purposes, the values of the competing estimators are
obtained for a proportional odds model ($G(\eta) = \exp(\eta)/\{1 +
\exp(\eta)\}$) with $x_1 = -1/2$ and $x_2 = 1/2$ and $k = 4$, for each
of the $400$, $3136$ and $81796$ possible tables with row totals $m =
m_1 = m_2$, for $m = 3$, $m =5$, and $m = 10$, respectively. All
estimators considered are equivariant under linear transformations and
hence, according to the proof of Theorem~\ref{symmetry}, the outcome
of the complete enumeration for the comparative performance of the
estimators generalizes to any choice of $(x_1, x_2)^T$.

The estimators $\hat\beta$ and $\tilde\beta_{RB}$ are not available in
closed form and one needs to rely on iterative procedures for finding
the roots of $U_t({\bb\delta})$ and $U_t^*({\bb\delta})$, respectively, for
every $t \in \{1, 2, 3, 4\}$. Fisher scoring is used to obtain
$\hat\beta$ and the iterative maximum likelihood approach of
Subsection~\ref{adjustedCounts} is used for $\tilde\beta_{RB}$. The
maximum likelihood estimate is judged satisfactory if the current
value ${\bb\delta}^c$ of the iterative algorithm satisfies $|U_t({\bb\delta}^c)|
< 10^{-10}$ for every $t \in \{1, 2, 3, 4\}$. For $\tilde\beta_{RB}$,
the latter criterion is used with $U_t^*$ in the place of $U_t$.

\begin{figure}[t!]
\caption{A pictorial representation of the probability settings
  considered in the calculation of expectations from the complete
  enumeration study. The left hand side of each plot depicts the
  multinomial probabilities for $x = -1/2$ and the right the
  multinomial probabilities for $x = 1/2$. The $8$ probabilities ($4$
  for each $x$ value) for each particular combination of values for
  $\beta$ and $(\alpha_1, \alpha_2, \alpha_3)$ are connected with line
  segments. Hence each piecewise linear function on each plot
  corresponds to a specific probability setting for the $2\times 4$
  contingency table with fixed row totals. The plots correspond to
  particular settings for the nuisance parameters $(\alpha_1,
  \alpha_2, \alpha_3)$ determined by $e (-1, 0, 1)$, and each plot
  contains all possible piecewise linear functions for the values of
  $\beta$ on an equi-spaced grid of size $50$ in the interval $[-6,
  6]$.}
\includegraphics[width = \textwidth]{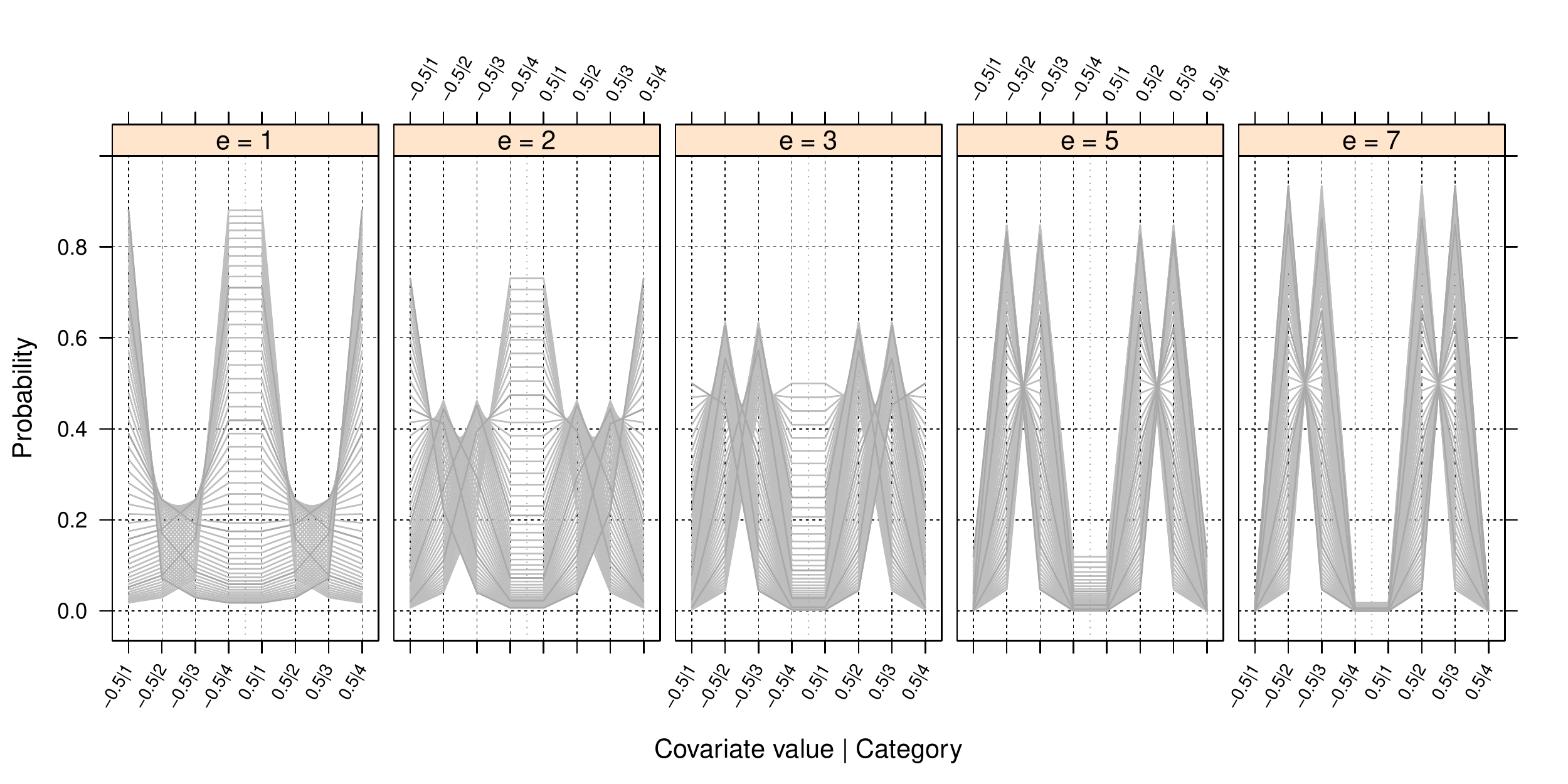}
\label{probsettings}
\end{figure}

For evaluating the performance of the estimators, the probability of
each of the tables has been calculated under model~(\ref{2xk_model}),
for parameter values that are fixed according to the following
scheme. The parameter $\beta$ takes values on some sufficiently fine
equi-spaced grid in the interval $[-6, 0]$. For $\beta$ in the
interval $(0, 6]$ the results can be predicted by the symmetry
relations of Theorem~\ref{symmetry}. For each value of $\beta$, the
nuisance parameters take values $(\alpha_1, \alpha_2, \alpha_3)^T = e
(-1, 0, 1)^T$ for $e \in \{1, 2, 3, 5, 7\}$. Figure~\ref{probsettings}
is a pictorial representation of the probability settings for the two
multinomial vectors in the $2 \times 4$ contingency table with fixed
row totals, at each combination of values for $\beta$ and $(\alpha_1,
\alpha_2, \alpha_3)^T$. Under the above scheme for fixing parameter
values, the probability of the end categories tends to zero as $e$
increases, and hence more extreme probability settings are being
considered as $e$ grows.

The findings of the current complete enumeration exercise are outlined
in the following Subsection.  The same complete enumeration design has
been applied to a number of settings, with $m_1 \ne m_2$, with
different link functions, with different numbers of categories, and/or
for different non-symmetric specifications for the nuisance parameters
(results not shown here) yielding qualitatively the same conclusions;
the current setup merely allows a clear pictorial representation of
the findings on the behaviour of the reduced-bias estimator. An R
script that can produce the results of the current complete
enumeration for any number of categories, any link function, any
configuration of totals and any combination of parameter settings in
$2 \times k$ contingency tables is available in the supplementary
material.

\begin{figure}[t!]
\caption{Probability of infinite estimates (top), conditional biases
  (middle) and conditional mean squared errors (bottom) of
  $\hat\beta$ and $\tilde\beta_{BC}$ for the parameter settings
  considered in the complete enumeration study.}
\begin{center}
\includegraphics[height = 0.27\textheight]{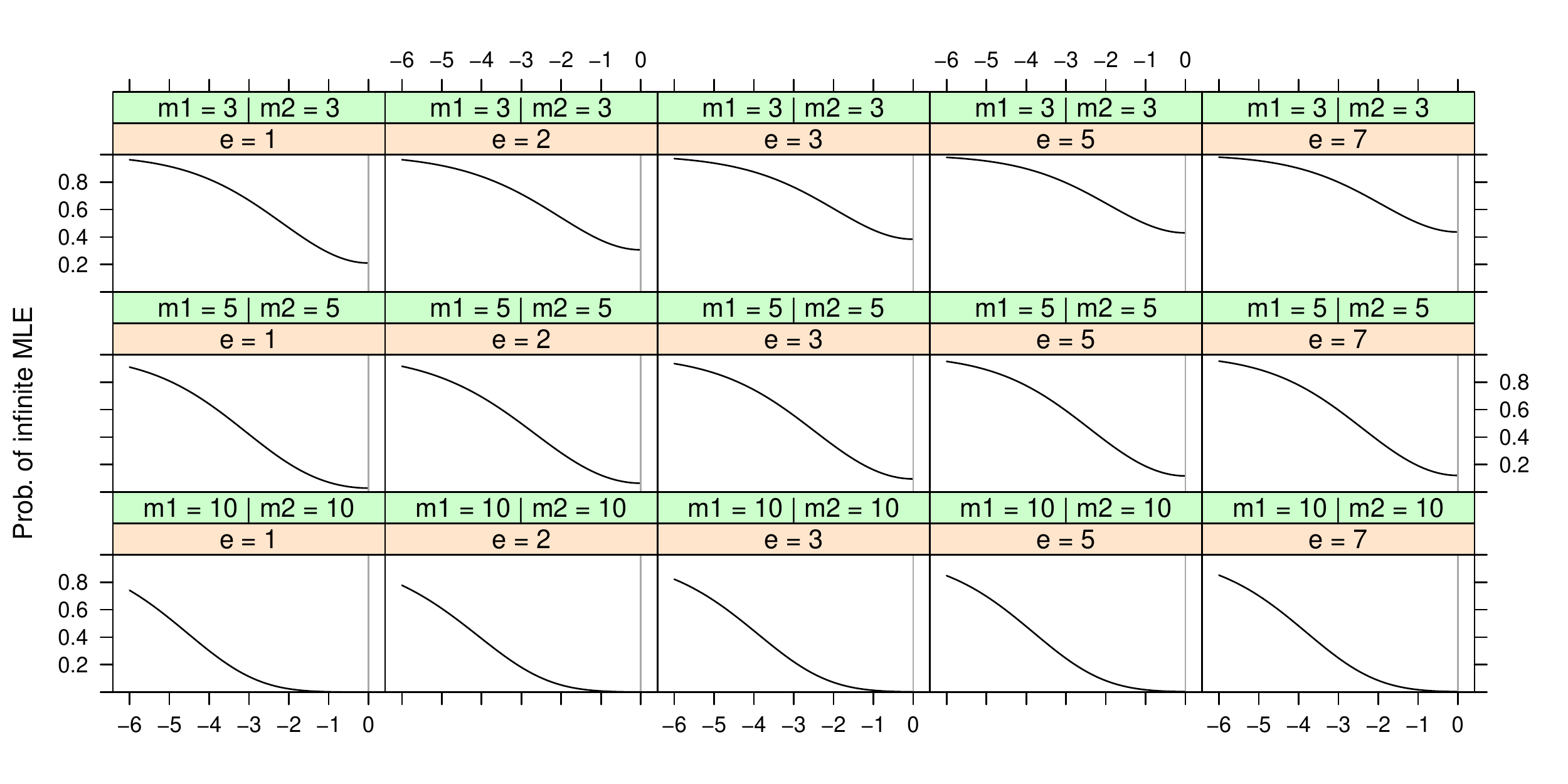}
\includegraphics[height = 0.295\textheight]{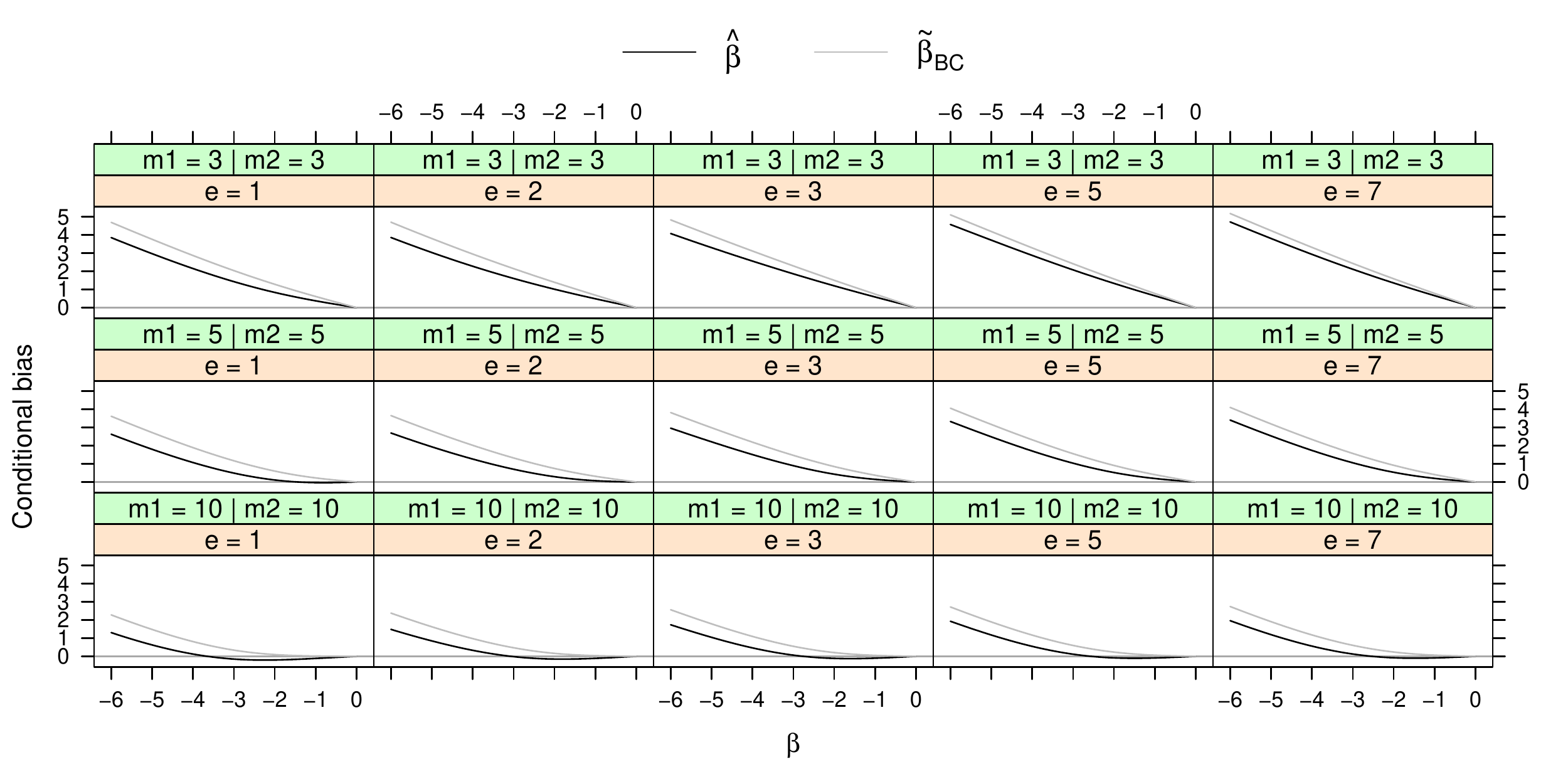}
\includegraphics[height = 0.295\textheight]{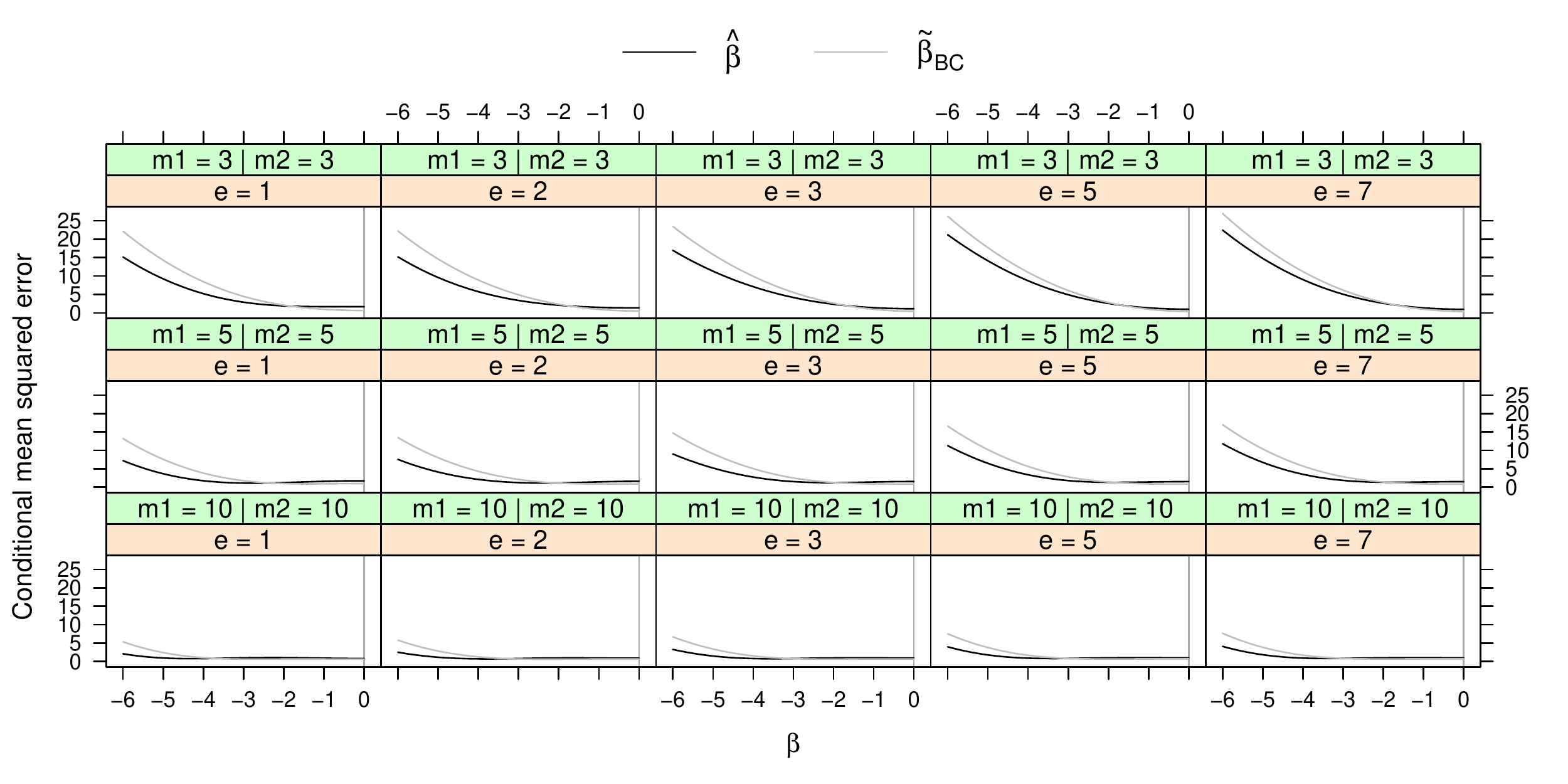}
\end{center}
\label{mlplots}
\end{figure}

\subsection{Remarks on the results}
\noindent {\bf Remark 1. On the estimates of $\alpha_1$, $\alpha_2$
  and $\alpha_3$:} According to Section~\ref{infinite}, for data sets
where a specific category $s \in \{1, 2, 3, 4\}$ is observed for
neither $x = -1/2$ nor $x = 1/2$, the maximum likelihood estimate of
$\alpha$ is on the boundary of the parameter space as follows:
\[
\begin{array}{ll}
  s = 1: & \hat\alpha_1 = -\infty  \\
  s = 2: & \hat\alpha_2 = \hat\alpha_1 \\
  s = 3: & \hat\alpha_3 = \hat\alpha_2 \\
  s = 4: & \hat\alpha_3 = +\infty \\
\end{array} \, .
\]

A least for log-concave $g(.)$, according to the results in
\citet{pratt:81}, the above equations extend directly to the case of
any number of categories and number of covariate settings and can
directly be used to check what happens when two or more categories are
unobserved.

Nevertheless, the maximum likelihood estimator of $\beta$ is invariant
to merging a non-observed category with either the previous or next
category and can be finite even if some of the $\alpha$ parameters are
on the boundary of the parameter space. Hence, maximum likelihood
inferences on $\beta$ are possible even if a category is not
observed. The same behaviour is observed for the reduced-bias
estimators of $\alpha_1, \alpha_2, \alpha_3$ with the difference that
if the non-observed category is $s = 1$ and/or $s = 4$, then
$\tilde\alpha_{1,RB}$ and/or $\tilde\alpha_{3,RB}$ are finite. A
special case of this observation has been encountered in
Subsection~\ref{cumlogits} where reduction of the bias corresponds to
adding $1/2$ to the end categories, guaranteeing the finiteness of the
cumulative logits. Hence, there is no need for non-observed
end-categories to be merged with the neighbouring ones when the
reduced-bias estimator is used. If any of the other categories is
empty, then the reduced-bias estimator of $\beta$ is invariant to
merging those with any of the neighbouring ones.

It should be mentioned here that if both the second and the third
category are empty then the reduced-bias estimate of $\beta$ and the
generalized empirical logistic transform are identical. To see that,
note that in the special case of logistic regression, the adjusted
scores in Subsection~\ref{binomModels} suggest adding half a leverage
to each of $y_{r1}$ and $y_{r2}$ $(r = 1, 2)$ \citep[this result for
logistic regressions was obtained in][]{firth:93}. Furthermore, the
model with $q=1$ is saturated and hence both leverages are $1$. Hence
the reduced-bias estimate of $\beta$ coincides with the generalized
empirical logistic transform, which for $k =2$ is $\log\{(y_{11} +
1/2)/(m_1 - y_{11} + 1/2)\} - \log\{(y_{21} + 1/2)/(m_2 - y_{21} +
1/2)\}$.

\bigskip\noindent {\bf Remark 2. On $\hat\beta$ and
  $\tilde\beta_{BC}$:} As is expected from the discussion in
Section~\ref{infinite}, the maximum likelihood estimator of $\beta$ is
infinite for certain configurations of zeros on the table, and for
such configurations the bias-corrected estimator is also undefined
owing to its explicit dependence on the maximum likelihood
estimator. Hence, for $\hat\beta$ and $\tilde\beta_{BC}$, the bias
function is undefined and the mean squared error is infinite. A
possible comparison of the performance of $\hat\beta$ and
$\tilde\beta_{BC}$ is in terms of conditional bias and conditional
mean squared error where the conditioning event is that $\hat\beta$
has a finite value.

For detecting parameters with infinite values the diagnostics in
\citet[][\S 4]{lesaffre:89} for multinomial logistic regressions are
adapted to the current setting. Data sets that result in infinite
estimates for $\beta$ have been detected by observation of the size of
the corresponding estimated standard error based on the inverse of the
Fisher information, and by observation of the absolute value of the
estimates when the convergence criteria were satisfied. If the
standard error was greater than $200$ and the estimate was greater
than $100$, then the estimate was labelled infinite. A second pass
through the data sets has been performed making the convergence
criterion for the Fisher scoring stricter than $|U_t({\bb\delta}^c)| <
10^{-10}$. The estimates that were labelled infinite using the
aforementioned diagnostics, further diverged towards infinity while
the rest of the estimates remained unchanged to high accuracy.

The probability of encountering an infinite $\hat\beta$ for the
different possible parameter settings is shown at the top row of
Figure~\ref{mlplots}. For $\beta \in (0, 6)$ the probability of
encountering an infinite value is simply a reflection of the
probability in $( -6, 0)$. As is apparent the probability of infinite
estimates increases as $e$ increases and for each value of $e$ it
increases as $|\beta|$ increases. As is natural as $m$ increases, the
probability of encountering infinite estimates is reduced but is
always positive.

Of course, the findings from the current comparison of $\hat\beta$
with $\tilde\beta_{BC}$ should be interpreted critically, bearing in
mind the conditioning on the finiteness of $\hat\beta$; the comparison
suffers from the fact that the first-order bias term that is required
for the calculation of $\tilde\beta_{BC}$ is calculated
unconditionally. The comparison is fairer when the probability of
infinite estimates is small; this happens on a region around zero
whose size also increases as $m$ increases.

The conditional bias and conditional mean squared error of $\hat\beta$
and $\tilde\beta_{BC}$ are shown in the left and right of the second
row of Figure~\ref{mlplots}. The identities in Theorem~\ref{symmetry}
apply also for the conditional and conditional mean squared error; to
see this set $P$ to be the conditional probability of each table in
the proof of Theorem~\ref{symmetry}. Hence, for $\beta\in (0, 6)$, the
conditional bias is simply a reflection of the conditional bias for
$\beta\in (-6, 0)$ across the $45^o$ line, and the conditional mean
squared error is a reflection of the conditional mean squared error for
$\beta \in (-6, 0)$ across $\beta = 0$.

\begin{figure}[t!]
\caption{Biases (top) and mean squared errors (bottom) of
  $\hat\beta_{EL}$ and $\tilde\beta_{RB}$ for the parameter settings
  considered in the complete enumeration study.}
\begin{center}
\includegraphics[height = 0.295\textheight]{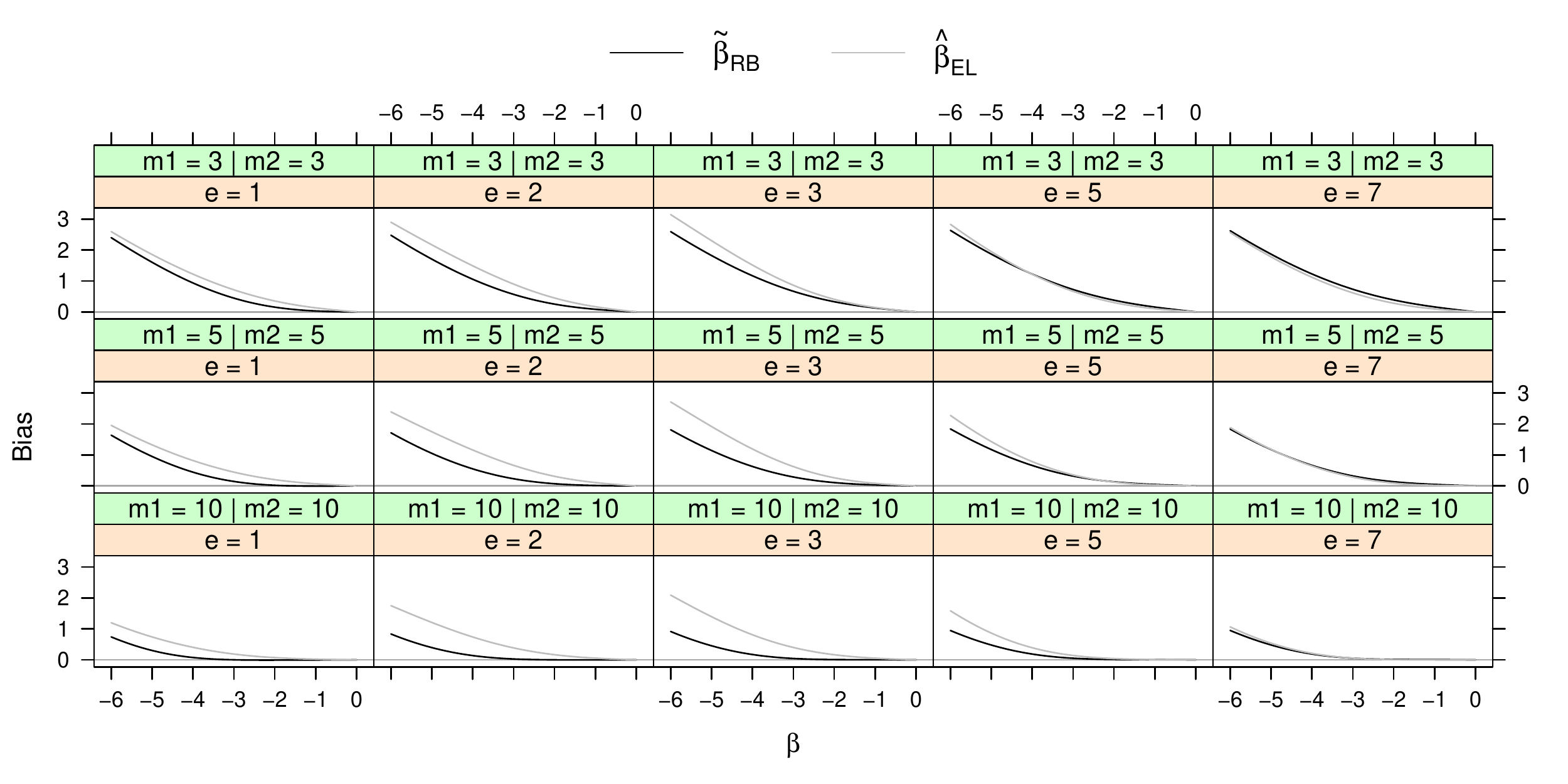}
\includegraphics[height = 0.295\textheight]{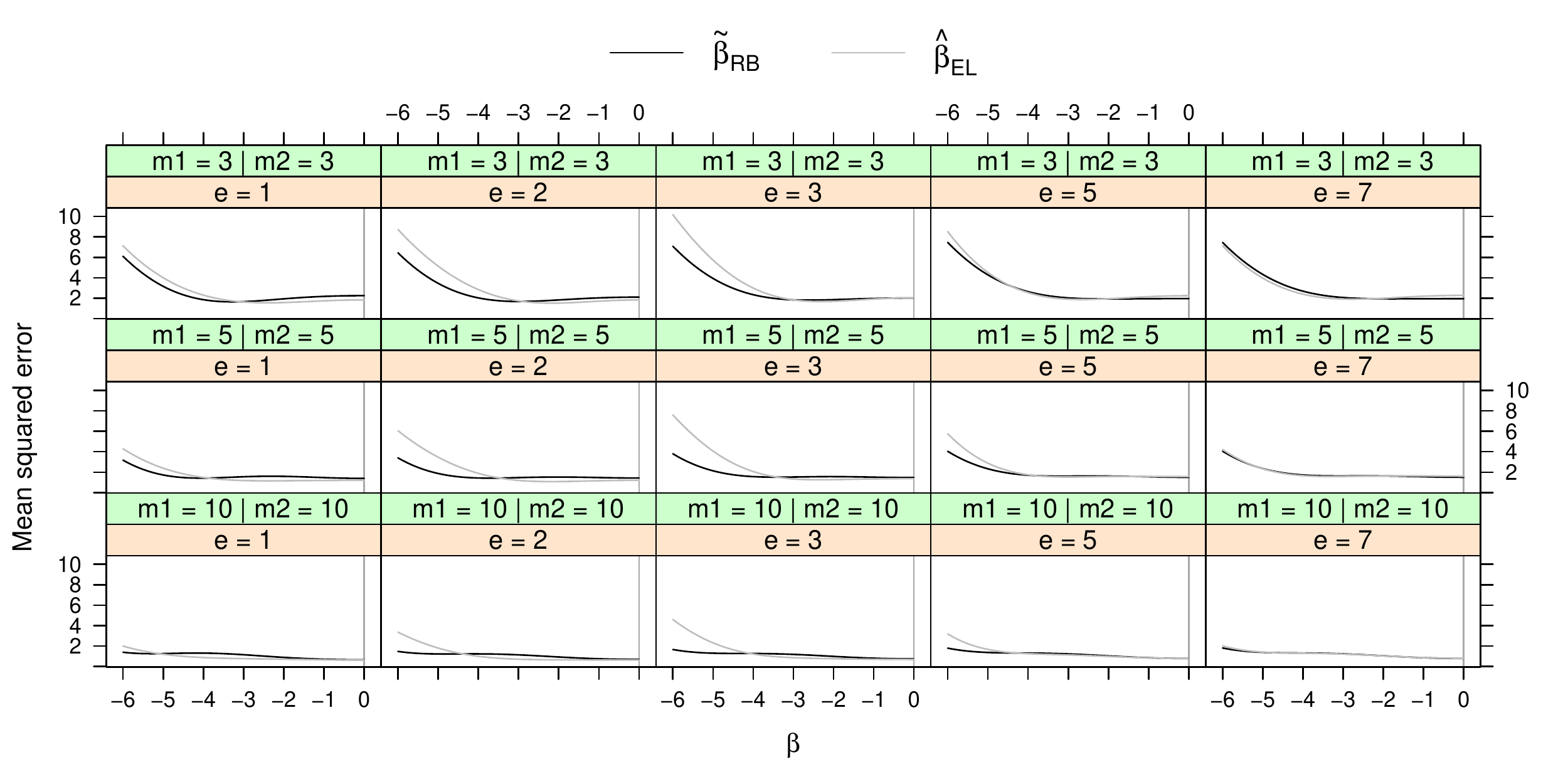}
\end{center}
\label{brplots}
\end{figure}

The behaviour of the estimators in terms of conditional bias is
similar, with the maximum likelihood estimator performing slightly
better than $\tilde\beta_{BC}$ for small $m$. As $m$ increases the bias
corrected estimator starts performing better in terms of bias in a
region around zero, where the probability of infinite estimates is
smallest. The same is noted for the conditional mean squared
error. The estimator $\tilde\beta_{BC}$ performs better than
$\hat\beta$ in a region around zero, whose size increases as $m$
increases. The same behaviour as for $e = 7$ persists for larger
values of $e$ (figures not shown here).

\bigskip\noindent {\bf Remark 3. On $\hat\beta_{EL}$ and
  $\tilde\beta_{RB}$:} The estimators $\hat\beta_{EL}$ and
$\tilde\beta_{RB}$, always have finite value irrespective of the
configuration of zeros on the table. Hence, in contrast to $\hat\beta$
and $\tilde\beta_{BC}$, a comparison in terms of their unconditional
bias and unconditional mean squared error is possible. The left plot
of Figure~\ref{brplots} shows the bias function of the estimator for
the parameter settings considered in the complete enumeration
study. For $\beta\in (0, 6)$, the bias function is simply a reflection
of the bias for $\beta\in (-6, 0)$ across the $45^o$ line, and the
mean squared error is a reflection of the mean squared error for
$\beta \in (-6, 0)$ across $\beta = 0$.

The reduced-bias estimator performs better than $\hat\beta_{EL}$ in
terms of bias for small values of $e$ and the differences in the bias
functions diminish as $e$ increases. A similar limiting behaviour
holds for their mean squared errors, though for small values of $e$,
$\hat\beta_{EL}$ performs slightly better than $\tilde\beta_{BR}$ in
terms of mean squared error in the range $(-4, 4)$ and worse outside
that range. The mean squared error of both estimators converges to
zero as $m$ increases, which is what is expected from consistent
estimators \citep[see,][\S 6.3 for a proof of the consistency of the
reduced-bias estimator]{kosmidis:07}.

\begin{figure}[t]
\caption{Coverage probabilities of nominally 95\% asymptotic
  Wald-type confidence intervals for $\beta$, based on $\hat\beta$ and
  $\tilde\beta_{BC}$ (top) and $\hat\beta_{EL}$ and
  $\tilde\beta_{RB}$ (bottom) and the respective standard errors, for
  $\beta \in [-10, 0)$ and $\alpha = e (-1, 0, 1)^T$ for $e \in \{1, 2, 3,
  5, 7\}$.}
\begin{center}
\includegraphics[height = 0.295\textheight]{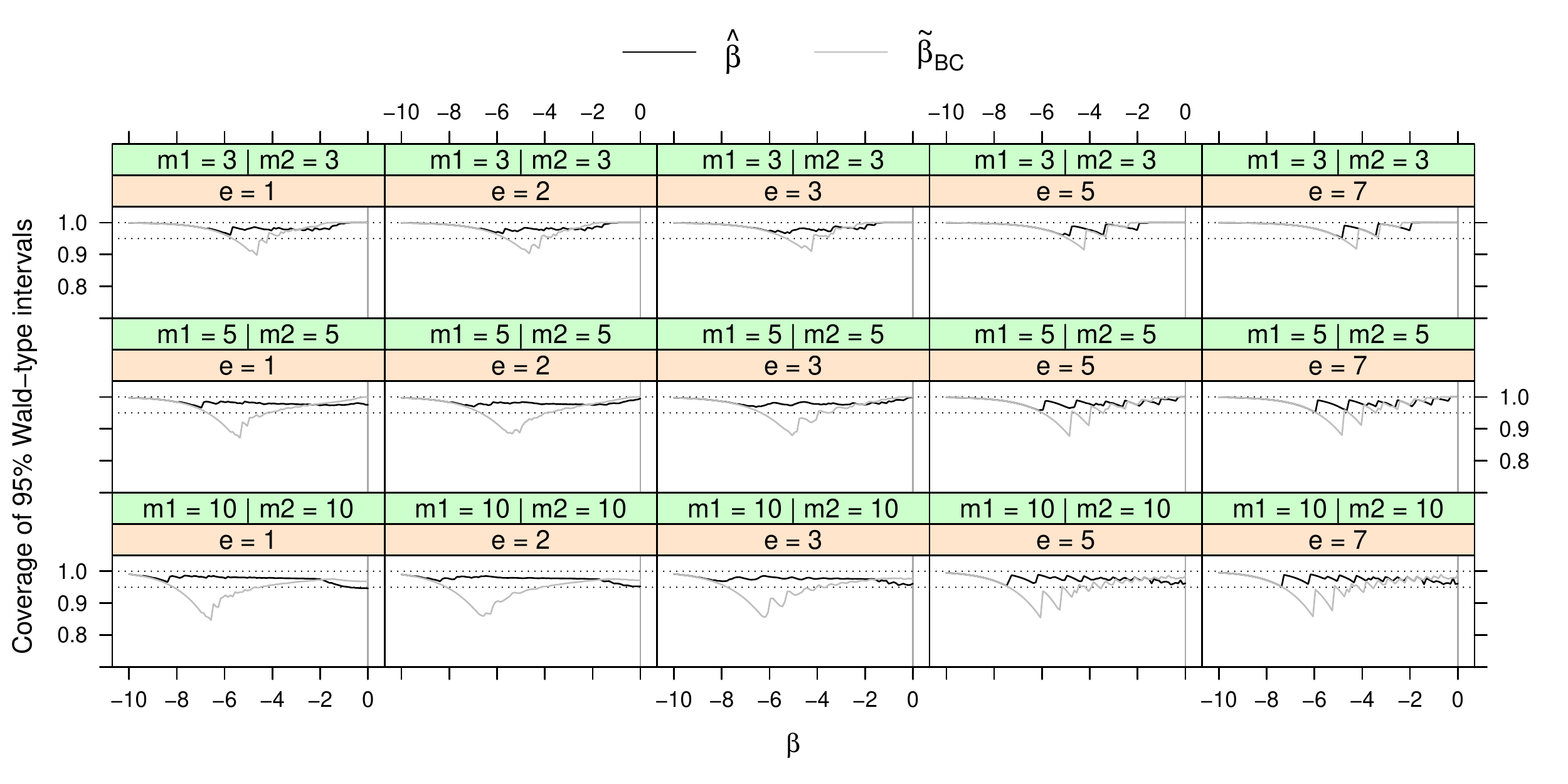}
\includegraphics[height = 0.295\textheight]{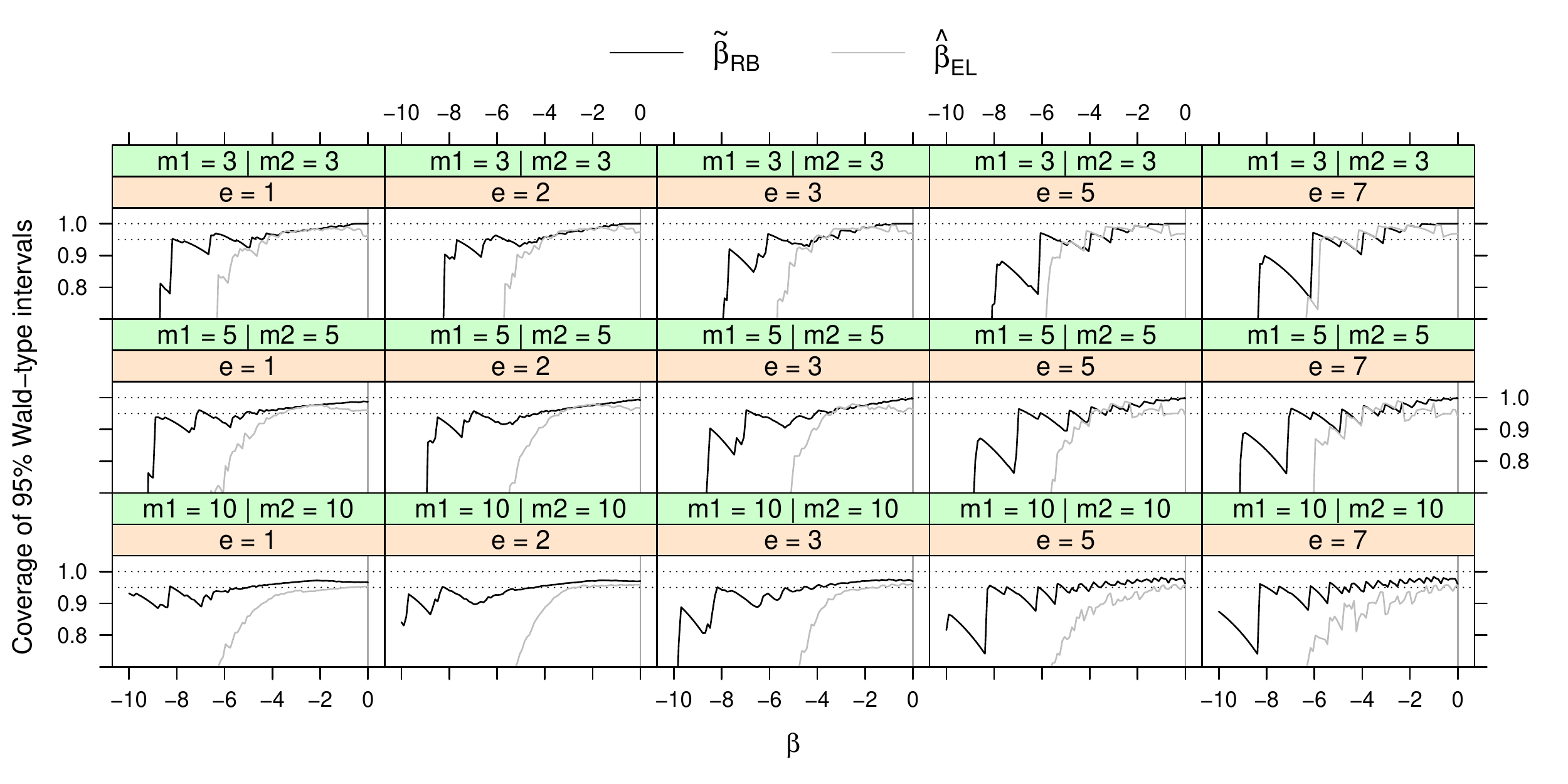}
\end{center}
\label{coverplots}
\end{figure}

\bigskip\noindent {\bf Remark 4. On the coverage of 95\% Wald
  confidence intervals} For a table $T$ and an estimator
$\beta^*(T)$, consider the nominally $100(1-a)\%$ Wald-type
confidence interval for $\beta$
\[
\beta^*(T) \pm z_{1 - a/2} S^*(T) \, ,
\]
where $z_{a}$ is the $100a$th quantile of a standard normal
distribution and $S^*(T)$ is the estimator of the standard error of
$\beta^*(T)$. For $\hat\beta$, $\tilde\beta_{BC}$ and
$\tilde\beta_{RB}$, $S^*(T)$ is taken to be the square root of the
diagonal element of the inverse of the Fisher information
corresponding to $\beta$, evaluated at $\hat\beta(T)$,
$\tilde\beta_{BC}(T)$ and $\tilde\beta_{RB}(T)$, respectively. For the
estimation of the standard error for $\hat\beta_{EL}$, the variance
formula given in \citet[\S 2.3]{mccullagh:80} is used. If the maximum
likelihood estimate is infinite then we make the convention that the
confidence intervals based on $\hat\beta$ and $\tilde\beta_{BC}$ are
$(-\infty, \infty)$. Figure~\ref{coverplots} shows the coverage
probabilities of the four competing intervals for $\alpha = e (-1, 0,
1)^T$ with $e \in \{1, 2, 3, 5, 7\}$, and for $\beta \in [-10,
0]$. The coverage probability for $\beta \in (0, 10)$ is simply a
reflection of the coverage probability for $\beta \in (-10, 0)$ across
$\beta = 0$.

Wald-type confidence intervals based on the maximum likelihood
estimator demonstrate a conservative behaviour in terms of coverage,
and the coverage probability converges to $1$ as $|\beta| \to
\infty$. Furthermore, the coverage probability seems to uniformly get
closer to the nominal level as $m$ increases. The intervals based on
the bias-corrected estimator also demonstrate a conservative behaviour
in a neighbourhood around $\beta = 0$, then tend to undercover for an
interval of large $|\beta|$ values, and as for $\hat\beta$, when
$|\beta| \to \infty$ the coverage probability tends to $1$.

A more dramatic undercoverage is present for confidence intervals
based on $\hat\beta_{EL}$ when $|\beta|$ is large. Actually after some
value of $|\beta|$ the confidence intervals based on $\hat\beta_{EL}$
completely lose coverage (the full range of the coverage probability
is not shown here). On the other hand, those intervals behave
satisfactorily around $\beta = 0$. This behaviour relates to the fact
that the variance estimator for $\hat\beta_{EL}$ is obtained under the
assumption that $\beta = 0$ and can seriously underestimate the
variance of $\hat\beta_{EL}$ when $|\beta|$ is larger than about $1$
\citep[the same observation is also made in][\S
2.3]{mccullagh:80}. Furthermore, it is worth noting that the point
where coverage is lost completely moves closer to zero as $m$
increases. Hence, use of Wald-type confidence intervals based on
$\hat\beta_{EL}$ is not recommended in practical applications.

Apart from being conservative, confidence intervals based on
$\tilde\beta_{RB}$ seem to behave better for a wider range of $\beta$
around zero, but also completely lose coverage after some value of
$|\beta|$. The complete loss of coverage for large effects is due to
an interplay of discreteness of the response and the fact that
$\tilde\beta_{RB}$ and $\hat{\beta}_{EL}$ take always finite values.
Specifically, for any finite $m$ there is only a finite number of
possible Wald-type confidence intervals because the response is
multinomially distributed, and any of those confidence intervals has
finite endpoints. Therefore, there will always be a large enough value
of $|\beta|$ which is not contained in any of the confidence intervals
resulting to a complete loss of coverage. Nevertheless, in contrast to
$\hat\beta_{EL}$, the coverage properties of the Wald-type confidence
intervals based on $\hat\beta_{RB}$ improve quickly and the value
where coverage is lost moves quickly away from zero as $m$
increases. This is because the cardinality of the set of the possible
confidence intervals increases and the approximation of the
necessarily discrete distribution of the reduced-bias estimator by a
Normal distribution with variance the inverse of the Fisher
information gets more accurate. This results in the increasing
accuracy of the approximation of the distribution of the Wald-pivot by
a Normal distribution.

As the current study demonstrates the Wald-type confidence intervals
based on any of the estimators do not behave satisfactorily for the
whole range of $\beta$ and for small sample sizes. For this
reason current research focuses on alternative confidence intervals
that can have one infinite endpoint (see
Section~\ref{discussionSec}). Until conclusive results are produced,
Wald-type confidence intervals based on the reduced-bias estimator can
still be used in practice as asymptotically correct, bearing in mind
that, they will be generally slightly conservative for moderate
effects (like the ones based on the maximum likelihood estimator)
especially in small samples, and also that their coverage properties
will deteriorate for extremely large effects.

\section{Shrinkage towards a binomial model for the end-categories}
Table~\ref{shrinkageTable} shows the maximum likelihood estimates, the
reduced-bias estimates and the corresponding estimated standard errors
from fitting a proportional odds model and a proportional hazards
model of the form~(\ref{2xk_model}) to the artificial data considered
in Example~\ref{artificial}.

\begin{table}[t!]
  \caption{Parameter estimates and corresponding estimated standard
    errors (in parenthesis) from fitting a proportional odds model and
    a proportional hazards model of the form~(\ref{2xk_model}) to the
    artificial data considered in Example~\ref{artificial}, using
    maximum likelihood and bias reduction.}
\begin{small}
\begin{center}
\begin{tabular}{cccc}
  \midrule\midrule
  Model & Parameter & Maximum likelihood & Bias reduction \\ \midrule
  \multirow{4}{*}{\parbox{4.5cm}{Proportional odds \\ ($G(\eta) = \exp(\eta)/\{1 +
      \exp(\eta)\}$)}} & $\beta$ & -1.944  (0.895) & -1.761 (0.850) \\
  & $\alpha_1$ & 1.187 (0.449) & 1.084 (0.428)  \\
  & $\alpha_2$ & 3.096 (0.787) & 2.781 (0.701)  \\
  & $\alpha_3$ & $\infty$ ($\infty$) & 4.457 (1.440) \\ \midrule
  \multirow{4}{*}{\parbox{4.5cm}{Proportional hazards \\ ($G(\eta) = 1
      - \exp\{-\exp(\eta)\}$)}} & $\beta$ & -0.689 (0.401) & -0.635 (0.389) \\
  & $\alpha_1$ & 0.313 (0.220) & 0.297 (0.219) \\
  & $\alpha_2$ & 1.097 (0.260) & 1.013 (0.246) \\
  & $\alpha_3$ & $\infty$ ($\infty$) & 1.518 (0.357) \\ \midrule\midrule
\end{tabular}
\end{center}
\end{small}
\label{shrinkageTable}
\end{table}

There is apparent shrinkage of the reduced-bias estimates towards
zero, which implies a shrinkage of the cumulative probabilities
towards $G(0)$. This implies a shrinkage of the probabilities for the
first and the last category of the ordinal scale towards $G(0)$ and
$1-G(0)$ respectively, and a corresponding shrinkage of the
probabilities of the intermediate categories towards zero.

To investigate further the apparent shrinkage effect, the maximum
likelihood and reduced-bias estimates of proportional odds and
proportional hazards models of the form~(\ref{2xk_model}) are obtained
for every possible $2 \times 6$ table with row totals $m_1 = m_2 =
3$. This setting is chosen because it is one that results in sparse
tables, allowing the construction of plots of fitted probabilities
that are not massively overcrowded (under this setting there are
$3136$ tables to be estimated).

For each category of the ordinal response, Figure~\ref{shrinkagePlots}
shows the fitted probabilities based on the reduced-bias estimator
against the fitted probabilities based on the maximum likelihood
estimator. The grey areas are where the points would all be expected
to lie if the shrinkage relationships were strictly satisfied for each
pair of fitted probabilities. Clearly this is not the case.

The points on the plots for the first category roughly lie slightly
above the $45^o$ line for fitted values less than $G(0)$, and slightly
below it for fitted values greater than $G(0)$.  The points for the
last category exhibit similar behaviour but with $G(0)$ replaced by
$1-G(0)$.  The shrinkage effect appears to be stronger the further the
probability is from the shrinkage points $G(0)$ and $1 - G(0)$.

The points on the plots for the intermediate categories lie mostly
under the $45^o$ line, except in cases where the maximum likelihood
fitted probability is very close to zero. Hence, the fitted
probabilities for the intermediate categories based on the
reduced-bias estimator tend to shrink towards zero.  The plots also
suggest that the further the probability is from zero the stronger is
the shrinkage effect.

\begin{figure}[t!]
  \caption{The fitted probabilities based on the reduced-bias
    estimator ($\tilde\pi_s$) against the fitted probabilities based
    on the maximum likelihood estimator ($\hat\pi_s$), for each
    category of the response. The top row corresponds to the
    proportional odds model and the bottom to a proportional hazards
    model. The grey areas are where the points would be expected to
    lie if shrinkage was strict.}
  \includegraphics[width = \textwidth]{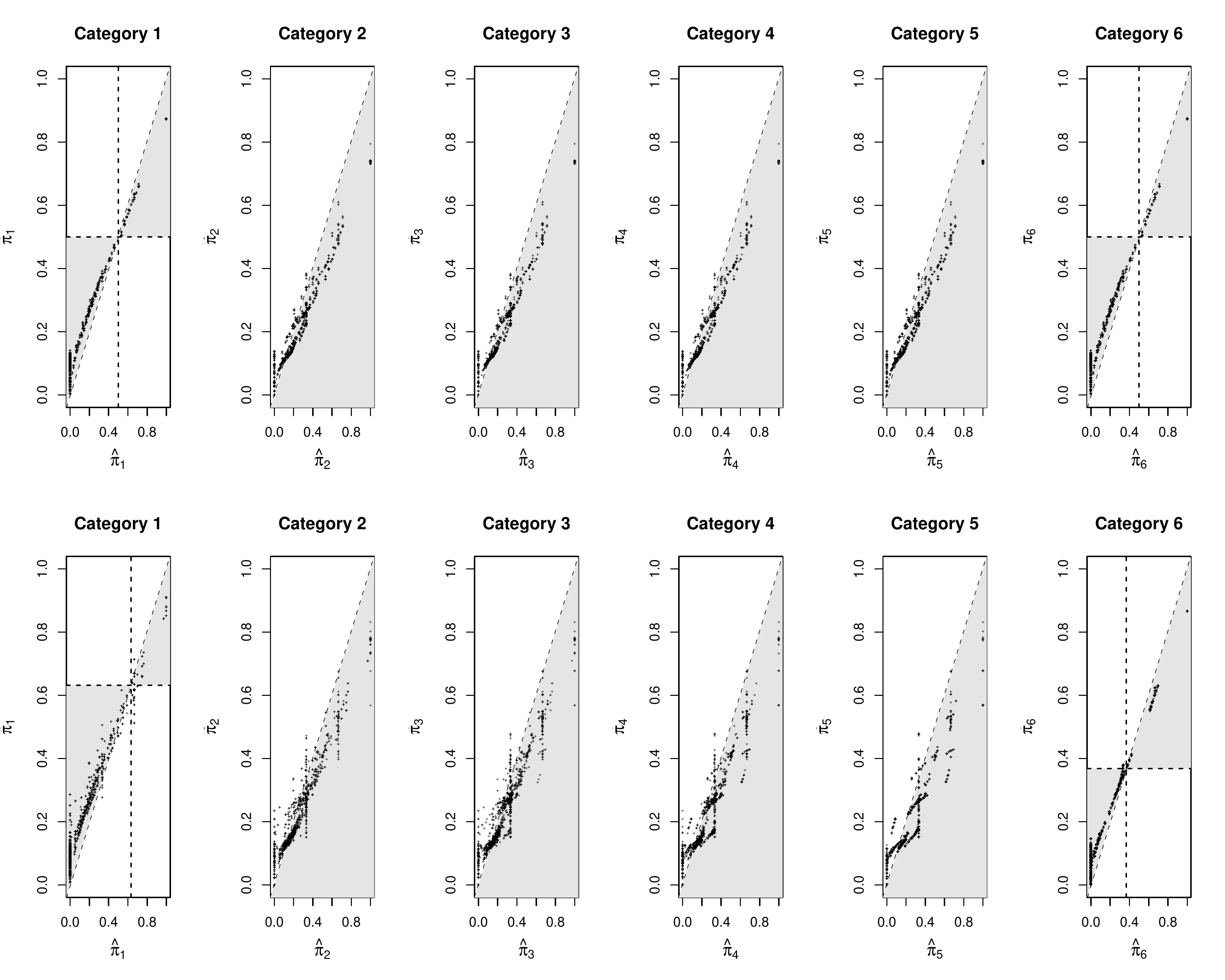}
  \label{shrinkagePlots}
\end{figure}

The shrinkage properties observed here are a direct generalization of
the shrinkage that is implied by improving bias in the estimation of
binomial logistic regression models \citep{copas:88, cordeiro:91,
  firth:92a} to links other than the logistic and to models with
ordinal responses.

Corresponding empirical investigations of shrinkage based on both
complete enumerations and simulations under models fitted to real data
have also been performed but are not shown here. The results are
qualitatively the same: reduction of bias in cumulative link models
shrinks the multinomial model towards a binomial model that has
probability $G(0)$ for the first category and probability $1 - G(0)$
for the last category.

\section{A simulation study}
\label{simulationSec}
In order to further illustrate the properties of the reduced-bias
estimator in more complex scenarios than the one in the complete
enumeration study of Section~\ref{completeenumeration}, a simulation
study is set-up based on part of the data that has been analyzed in
\citet{jackman:04}. The data is publicly available through the R
package \texttt{pscl} \citep{jackman:12} and seems to agree with the
data available for rater F1 in the analysis in \citet{jackman:04}. The
data contains the score of rater F1 for 106 applications to the
Political Science PhD Program at Stanford University along with
corresponding applicant-specific observations. The rater's score is on
a five-point integer-valued ordinal scale from $1$ to $5$, with $1$
indicating the lowest rating and $5$ indicating the highest
rating. Consider that the cumulative log-odds for rating $s$ for the
$r$th candidate is modelled as
\begin{equation}
  \label{ratings}
  \log\frac{\gamma_{rs}}{1 - \gamma_{rs}} = \alpha_s - \beta_1 x_{r1}
  - \beta_2 x_{r2} - \beta_3 z_{r1} - \beta_4 z_{r2} - \beta_5 g_{r}
  \quad (r = 1, \ldots, 106; s = 1, \ldots, 4)\, ,
\end{equation}
where $x_{r1}$ and $x_{r2}$ are the $r$th applicant's scores on the
quantitative and verbal section of the GRE, respectively (after
subtracting the respective mean and dividing by the respective
standard deviation), $z_{r1}$ and $z_{r2}$ are dummy variables
indicating whether the $r$th applicant has an interest in American
politics and Political Theory, respectively (with $1$ representing a
positive reply and $0$ a negative one), and $g_r$ is the gender of the
$r$th applicant $(r = 1, \ldots, 106)$. The parameter $\alpha_1,
\ldots, \alpha_5$ are the cutpoints and $\beta_1, \ldots, \beta_5$
describe the effect of the corresponding applicant-specific covariates
on the cumulative log-odds.
\begin{table}
  \caption{Estimated biases, mean-squared errors (MSE) and coverage
    probabilities of $95\%$ Wald-type confidence intervals from a
    simulation of size $10^5$ under the maximum likelihood fit of
    model~(\ref{ratings}). The last column shows the estimated relative
    increase of the mean squared error from its absolute minimum (the
    variance) due to bias. The relative increase of the mean squared
    error is the square of
    the bias divided by the variance. The estimated simulation error
    is less than 0.004 for the bias and the MSE estimates and less
    than 0.001 for the coverage estimates.}
  \begin{small}
    \begin{center}
      \begin{tabular}{ccrrcc}
        \midrule \midrule
        Method & Parameter & \multicolumn{1}{c}{Bias} &
        \multicolumn{1}{c}{MSE} &
        \multicolumn{1}{c}{Coverage} &
        \multicolumn{1}{c}{Bias$^2$/Variance (in \%)} \\ \midrule
        \multirow{5}{*}{Maximum likelihood} & $\beta_1$ & 0.132 & 0.142 &
        0.943 & 13.928 \\
        & $\beta_2$ & 0.055 & 0.062 & 0.943 & 5.203 \\
        & $\beta_3$ & 0.208 & 0.722 & 0.947 & 6.347 \\
        & $\beta_4$ & 0.004 & 0.630 & 0.944 & 0.003 \\
        & $\beta_5$ & 0.077 & 0.238 & 0.944 & 2.569 \\ \midrule
        \multirow{5}{*}{Bias correction} & $\beta_1$ & -0.001 & 0.106
        & 0.948 & 0.002 \\
        & $\beta_2$ & 0.001 & 0.051 & 0.953 & 0.001 \\
        & $\beta_3$ & -0.004 & 0.577 & 0.954 & 0.002 \\
        & $\beta_4$ & 0.003 & 0.551 & 0.956 & 0.001 \\
        & $\beta_5$ & 0.001 & 0.205 & 0.954 & 0.000 \\  \midrule
        \multirow{5}{*}{Bias reduction} & $\beta_1$ & 0.002 & 0.107 &
        0.949 &  0.002 \\
        & $\beta_2$ & 0.002 & 0.051 & 0.953 &  0.007 \\
        & $\beta_3$ & 0.002 & 0.579 & 0.954 &  0.001 \\
        & $\beta_4$ & 0.004 & 0.553 & 0.956 &  0.003\\
        & $\beta_5$ & 0.003 & 0.205 & 0.954 &  0.003 \\  \midrule \midrule
      \end{tabular}
    \end{center}
  \end{small}
  \label{simulation}
\end{table}

Model~(\ref{ratings}) is fitted using maximum likelihood and the
maximum likelihood estimates of $\beta_1, \ldots, \beta_5$ are
$1.993$, $0.892$, $2.816$, $0.009$, $1.215$ respectively indicating
that an increase in the value of any of the covariates is associated
with higher probability for high ratings holding all else in the model
fixed. Then an extensive simulation under the maximum likelihood fit
is performed for estimating the biases, mean squared errors and
coverage probabilities of Wald-type $95\%$ confidence intervals for
$\beta_1, \ldots, \beta_5$ when maximum likelihood, bias correction
and bias reduction is used. There have been instances of simulated
data sets where one or more rating categories were empty. In those
cases, empty categories were merged with neighbouring ones according
to the discussion in Remark 1 of
Section~\ref{completeenumeration}. The results are shown on
Table~\ref{simulation}. There was only one data set for which the
maximum likelihood estimate of $\beta_3$ was $+\infty$. This data set
was excluded when estimating the bias, mean squared error and coverage
probability for the maximum likelihood and the bias-corrected
estimator and hence the corresponding figures in the table estimate
the conditional respective quantities (that is given that the maximum
likelihood estimator has finite value). On the other hand, the
reduced-bias estimates were finite for all datasets and hence the
corresponding figures are estimates of the targeted unconditional
quantities. In this particular setting, the probability of the
conditioning event is rather small and a direct comparison of the
estimated conditional and unconditional quantities can be informative.

Temporarily ignoring the fact that the maximum likelihood estimator
can be infinite, both the bias corrected and reduced bias estimators
perform equally well in the current study. Furthermore, the figures in
Table~\ref{simulation} demonstrate a significant reduction both in
terms of bias and mean squared error when either bias correction or
bias reduction is used. In the current study the effect of estimation
bias is quite significant; the mean squared errors of the components
of the maximum likelihood estimator are inflated by as much as
$13.9\%$ due to bias from their minimum values (the variances). The
corresponding inflation factors for the bias corrected and
reduced-bias estimators are quite close to zero, which when combined
with the observed reduction in mean squared error illustrates the
benefits that the reduction of bias can have in the estimation of such
models. Lastly, a slight improvement in the coverage properties of
Wald-type confidence intervals is observed when the bias is
corrected.

Overall and taking into account that the reduced-bias estimator is
always finite the current study illustrates its superior frequentist
properties from the alternatives.

\section{Wine tasting data}
\label{wineData2}

\begin{table}[t]
  \caption{The reduced-bias estimates for the parameters of
    model~(\ref{partialProp}), the corresponding estimated standard
    errors (in parenthesis) and the values of the $Z$ statistic
    for the hypothesis that the corresponding parameter is
    zero.}
  \begin{small}
  \begin{center}
    \begin{tabular}{crrr}
\midrule\midrule
Parameter & \multicolumn{2}{c}{RB estimates} & $Z$ statistic \\ \midrule
$\alpha_1$ & -1.19 & (0.50) & -2.40 \\
$\alpha_2$ &  1.06 & (0.44) & 2.42 \\
$\alpha_3$ &  3.50 & (0.74) & 4.73 \\
$\alpha_4$ &  5.20 & (1.47) & 3.52 \\
$\beta_1$ &  2.62 & (1.52) & 1.72 \\
$\beta_2$ &  2.05 & (0.58) & 3.54 \\
$\beta_3$ &  2.65 & (0.75) & 3.51 \\
$\beta_4$ &  2.96 & (1.50) & 1.98 \\
$\theta$ & 1.40 & (0.46) & 3.02 \\
\midrule\midrule
\end{tabular}
\end{center}
\end{small}
\label{wineDataRB}
\end{table}

The partial proportional odds model of Example~\ref{wineExample} is
here refitted using the reduced-bias estimator. The result is shown
in Table~\ref{wineDataRB}. All estimates and estimated standard errors
are finite. A Wald statistic for testing departures from the
assumption of proportional odds via departures from the hypothesis
$\beta_1 = \beta_2 = \beta_3 = \beta_4$ is
\[
W = (L\tilde{\bb\beta}_{RB})^T I(\tilde{\bb\delta}_{RB})
L\tilde{\bb\beta}_{RB} \, ,
\]
where
\[
L = \left[
\begin{array}{cccc}
1 & 0 & 0 & -1 \\
0 & 1 & 0 & -1 \\
0 & 0 & 1 & -1
\end{array}
\right]
\]
is a matrix of contrasts of $\beta$. The matrix
\[
I({\bb\delta}) = \left\{LF^{{\bb\beta}{\bb\beta}}({\bb\delta})L^T\right\}^{-1}\, ,
\]
is the inverse of the variance-covariance matrix of the asymptotic
distribution of $L\tilde{\bb\beta}_{RB}$, where
$F^{{\bb\beta}{\bb\beta}}({\bb\delta})$ is the ${\bb\beta}$-block of
the inverse of the Fisher information. By the asymptotic normality of
$\tilde{\bb\beta}_{RB}$, $W$ has an asymptotic $\chi^2$ distribution
with $3$ degrees of freedom. The value of $W$ for the data in
Table~\ref{wineData} is $0.7502$ leading to a $p$-value of $0.861$,
which provides no evidence against the proportional odds
assumption. This is also apparent from Table~\ref{wineDataRB} where
the reduced-bias estimates of $\beta_1, \beta_2, \beta_3, \beta_4$ are
comparable in value.

It is worth noting that, in contrast to the output reported in
Example~\ref{wineExample}, the values of the $Z$ statistics for
$\alpha_4$, $\beta_1$ and $\beta_4$ are far from being exactly zero.

\section{Concluding remarks and further work}
\label{discussionSec}
Based on the results of the complete enumeration study,
$\tilde\beta_{RB}$ appears to be always finite in contrast to
$\hat\beta_{ML}$ and $\tilde\beta_{BC}$, and also to have comparable
behaviour to $\hat\beta_{EL}$ in terms of bias and mean squared
error. Furthermore, Wald-type asymptotic confidence intervals based on
$\tilde\beta_{RB}$ behave satisfactorily, maintaining good coverage
properties for a wide range of $\beta$ values. A complete loss of
coverage is still present but the point where this happens is far away
from zero and diverges as the number of observations increases. The
application of the current complete enumeration setup for
complementary log-log and probit link functions, for varying values of
the row totals, and for different numbers of categories, resulted in
qualitatively the same conclusions.

In Remark 1 of the complete enumeration study, the finiteness of the
reduced-bias estimates for $\alpha_1$ and/or $\alpha_q$ was noted even
in cases where the first and/or last category of the ordinal variable
is not observed. This behaviour has been also encountered in the
simulation study of Section~\ref{simulationSec} and in all of the many
settings where the reduced-bias estimator has been applied, and is
defensible from an experimental point of view. When the experimenter
sets an ordinal scale, the end-categories of that scale largely
determine the possible responses. Hence, one might argue that the end
categories should play a bigger role than the intermediate categories
in the analysis, and a good estimation method should not be as
democratic as maximum likelihood is in this respect; accepting that
the ordinal scale is well-defined, if an end category is not observed
then it seems more appropriate to slightly inflate its probability of
occurrence, instead of setting it to zero as the maximum likelihood
estimator would do.

The latter point of view does not only apply for non-observed
end-categories. It applies to all analyses of ordinal data through
cumulative link models and is reinforced by the fact that an
improvement in the frequentist properties of the maximum likelihood
estimator resulted in the shrinkage of the cumulative link model
towards a binomial model for the end-categories.

The above observations, along with the fact that $\tilde{\bb\delta}_{RB}$
respects the invariance properties of the cumulative link model and
can be easily obtained using the procedures in
Section~\ref{implementation}, provide a strong case for its routine
use in the estimation of cumulative link models.

\citet{laara:85} demonstrate the equivalence of continuation ratio
models with complementary log-log link and proportional hazards models
in discrete time. Hence, the reduced-bias estimates for the regression
parameters of the former can be obtained by using the results in the
current paper for the latter.

The investigation of confidence intervals that maintain good
properties without suffering from a complete loss of coverage for
extreme effects is the subject of future work. Current research
focuses on the use of profiles of the asymptotic pivot
$\{U^*({\bb\delta})\}^T F^{-1}({\bb\delta}) U^*({\bb\delta})$ which
can be shown to have an asymptotic $\chi^2$ distribution, and the
combination of the resultant intervals with the profile likelihood
intervals. In this way confidence intervals with one infinite endpoint
are possible and are suggested to accompany the reduced-bias estimator
which appears always to take finite value. Such intervals seem to
better reflect uncertainty when extreme settings are considered, and
lead to improved coverage properties without loss of coverage.

As is done in Subsection~\ref{wineData2}, comparison of nested models
can be performed using asymptotic Wald-type test based on the
reduced-bias estimator. Another option is the use of the adjusted
score statistic
\[
\{U^*(\tilde{\bb\delta}_{-})\}^T
F^{-1}(\tilde{\bb\delta}_{-}) U^*(\tilde{\bb\delta}_{-})\,,
\]
where $\tilde{\bb\delta}_{-}$ are the estimates under the hypothesis
that results in the smaller model, and $U^*({\bb\delta})$ and
$F({\bb\delta})$ are the vector of adjusted score functions and the
Fisher information of the larger model, respectively. The fact that
$U^*({\bb\delta}) = U({\bb\delta}) + A({\bb\delta})$ where
$A({\bb\delta}) = O(1)$ guarantees an asymptotic $\chi^2$ distribution
for that statistic. For the example in Subsection~\ref{wineData2} the
value of the adjusted score statistic is $0.9357$ on $3$ degrees of
freedom giving a $p$-value of 0.8168 which leads to qualitatively the
same conclusion as that of the Wald test. When testing departures from
the proportional odds assumption in general, the adjusted score
statistic has the same disadvantage as the ordinary score statistic;
the Fisher information matrix for the partial proportional odds model
can be non-invertible when evaluated at the estimates of the
corresponding proportional odds model.

\section*{Acknowledgments}
The author is grateful to two anonymous referees and the Associate
Editor whose comments have significantly improved the current
work. Furthermore, the author is grateful to David Firth for the
helpful and stimulating discussions on this work, and to Cristiano
Varin and Thomas W. Yee for their comments on this paper.

Part of this work was completed between September 2007 and September
2010 when the author was a CRiSM Research Fellow at University of
Warwick, and between January 2012 and July 2012 when the author was
Senior Research Fellow at the Department of Statistics of the
University of Warwick. The support of EPSRC is gratefully acknowledged
for funding both positions.

\section*{Supplementary material}
The accompanying supplementary material includes an R script
\citep{rproject:12} that can be used to produce the results of the
complete enumeration study for any number of categories, any link
function and any configuration of row totals in contingency
tables. The current version of the R function bpolr is also
included. The bpolr function fits cumulative link models and their
extensions with dispersion effects either by maximum likelihood, or
bias reduction or bias correction. An updated version of the function
will be part of the next major release of the R package brglm
\citep{brglm:07}. Scripts that reproduce the data analyses undertaken
in the paper are also available in the supplementary material.

\bibliographystyle{chicago}
\begin{small}
\bibliography{references.bib}
\end{small}

\end{document}